\DeclareMathOperator*{\sm}{<} 
\DeclareMathOperator*{\bi}{>} 
\DeclareMathOperator*{\argmax}{argmax} 
\newtheorem{theorem}{Theorem}
\newtheorem{proposition}[theorem]{Proposition}%
\newtheorem{lemma}[theorem]{Lemma}%
\theoremstyle{definition}{
\newtheorem{example}{Example}%
\newtheorem{remark}{Remark}%

\newtheorem{definition}{Definition}}
\newcommand\blfootnote[1]{%
  \begingroup
  \renewcommand\thefootnote{}\footnote{#1}%
  \addtocounter{footnote}{-1}%
  \endgroup
}
\begin{document}

\title[Stable Matching Games]{Stable Matching Games}








\author[1]{\fnm{Felipe} \sur{Garrido-Lucero}\blfootnote{F. Garrido-Lucero ORCID: 0000-0001-8973-8877}}\email{felipe.garrido-lucero@ut-capitole.fr}

\author[2]{\fnm{Rida} \sur{Laraki}\blfootnote{R. Laraki ORCID: 0000-0002-4898-2424}}\email{rida.laraki@um6p.ma}

\affil[1]{\orgname{IRIT, Université Toulouse Capitole}, \orgaddress{\city{Toulouse}, \country{France}}}


\affil[2]{\orgdiv{Moroccan Center for Game Theory, UM6P, }\orgaddress{\city{Rabat}, \country{Morocco}}}


\abstract{


Gale and Shapley introduced a matching problem between two sets of agents where each agent on one side has an exogenous preference ordering over the agents on the other side. They defined a matching as stable if no unmatched pair can both improve their utility by forming a new pair. They proved, algorithmically, the existence of a stable matching. Shapley and Shubik, Demange and Gale, and many others extended the model by allowing monetary transfers. We offer a further extension by assuming that matched couples obtain their payoff endogenously as the outcome of a strategic game they have to play in a usual non-cooperative sense (without commitment) or in a semi-cooperative way (with commitment, as the outcome of a bilateral binding contract in which each player is responsible for her part of the contract). Depending on whether the players can commit or not, we define in each case a solution concept that combines Gale-Shapley pairwise stability with a (generalized) Nash equilibrium stability. In each case we give necessary and sufficient conditions for the set of solutions to be non-empty and provide an algorithm to compute a solution. 

}

\keywords{Matching, Commitment, Stability, Constrained Nash equilibrium, Renegotiation proofness, Matching game}



\maketitle

\newpage

\section{Introduction}

\subsection{Literature review}

The \cite{gale1962college} one-to-one two-sided market matching problem, known as the \textit{marriage problem}, consists in finding a \textit{stable} pairing between two finite sets $D$ and $H$ of same size, given that each agent on each side has an \textit{strict exogenous (total) preference ordering} over the agents on the other side. Formally, each agent $k \in D \cup H$ is endowed with $\bi_k$ such that, for any two agents $\ell$ and $\ell'$ on the opposite side, $\ell \bi_k \ell'$ represents that $k$ prefers to be matched with $\ell$ rather than with $\ell'$. 

The marriage problem focuses on computing a coupling $\mu$ that associates each agent on one side to at most one agent on the other side. The coupling $\mu$ is \textit{stable} if no uncoupled pair of agents both prefer to be paired together rather than with their partners in $\mu$. Formally, a coupling $\mu$ is \textit{blocked} if there exists a pair of agents $(d,h) \in D\times H$, not matched between them, such that,
\begin{align*}
    h >_d  \mu(d) \text{ and } d >_h \mu(h),
\end{align*}
where $\mu(k)$ represents the partner of $k$ in the matching $\mu$. The matching is stable if no pair blocks it. Gale and Shapley used a \textit{deferred-acceptance} algorithm to prove the existence of a stable matching for every instance. Their algorithm takes one of the sides of the market, called the proposer-side, and asks its agents to propose to their most preferred option that has not rejected them yet. Agents receiving more than one proposal accept the best one and reject all the others. The algorithm continues until all agents on the proposer side have been accepted by somebody. 




\cite{roth1992two} studied the structure of the set of stable matchings and its core. \cite{vate1989linear} characterized the set of stable matchings in the one-to-one problem as extreme points of a polytope when $|D| = |H|$. \cite{rothblum1992characterization} extended this result to the case $|D| \neq |H|$. \cite{balinski1997stable} and (\citeyear{balinski1998graphs}) proposed an elegant directed graph approach to the problem and, like Rothblum, they characterized the stable matching polytope in the one-to-one problem through linear inequalities, proving that any feasible point of the polytope is a stable matching and vice-versa. 

One of the first extensions of the marriage problem to the \textit{endogenous preferences} setting is the \textit{assignment game} of \cite{shapley1971assignment} in which agents within the same couple can make monetary transfers. The leading example is a \textit{housing market} where buyers and sellers have quasi-linear utilities. 
Allocations in the Shapley-Shubik model are stable if there is no unmatched pair and no transaction price from buyer to seller such that both agents end up strictly better off by trading. Exploiting the linearity of the payoff functions on the monetary transfers, Shapley and Shubik found stable solutions for their problem using linear programming where a pair primal-dual gives, respectively, the matching and the vector of prices. 

The assignment game belongs to the class of \textit{cooperative games with transferable utility} as agents within the same couple have to split their \textit{worth} (sum of their valuations) in such a way that nobody prefers to change their partner. Moreover, Shapley and Shubik proved that the set of stable allocations for their assignment game is exactly the \textit{Core} of the housing market problem seen as a transferable utility cooperative game. 

\cite{rochford1984symmetrically} extended the assignment game with transferable utility by allowing both agents within a couple to negotiate the division of their joint value. In contrast to \cite{shapley1971assignment}, where only the buyers have bargaining power and the optimal solution corresponds to the competitive equilibrium that maximizes the utility of the buyers, Rochford introduced the concept of \textit{symmetrically pairwise-bargained} (SPB) allocations. More specifically, agents within a couple equally divide the remaining surplus after receiving their threat-level utility. Rochford proved that an SPB allocation always exists and proposed a re-bargaining process that converges to an SPB allocation when starting from a core allocation that is optimal for one of the sides.

\cite{demange1985strategy} considered a model à la Shapley-Shubik with non-quasi-linear monetary transfers, allowed monetary transfers on both sides (from buyer to seller and vice-versa), and proved that the set of stable allocations has a lattice structure (non-emptiness of this set has been proved in \cite{crawford1981job} and \cite{quinzii1984core}). \cite{roth1988interior} extended the notion of SPB to the general model of Demange and Gale with agents having equally bargaining power. Using Tarski's fixed-point theorem, they showed that the set of SPB allocations forms a lattice as well and that starting from the most preferred core allocation for one side, Rochford's re-bargaining process converges to their most preferred SPB allocation. \cite{bennett1988consistent} extended the study by considering players with varying degrees of bargaining power and proving that any marriage market has at least one bargaining equilibrium. \cite{chiappori2016matching} consider a model with non-transferable utility where couples must agree on a risk-sharing rule. They proved that a stable matching exists, that it is generically unique, and is negatively assortative, i.e., the more risk averse agents in one side are matched with the less risk averse agents in the other side.


\subsection{Contributions}

To motivate our contribution, think to a worker hired by a firm. In addition to a good salary, the firm can offer to the employee perks such as medical insurance, gym, extra time-off, flexible schedule, childcare assistance, and/or days of remote work. Similarly, the worker can promise to work hard, learn new technologies, be flexible and respectful of the company code of conduct. Alternatively, consider the example of a university hiring a professor. The university can promise to reduce the teaching duties, provide administrative assistance, guarantee some research budget, and respect the promotion plan. On the other hand, the professor can promise to publish in top ranked journals, be a good teacher, supervise students, and apply for grants.

While couples can commit explicitly or implicitly on the specific actions that each of them has to do, we should keep in mind that each agent can individually violate the agreement (a worker may refuse to work certain days, a professor may teach badly). A rational agent will deviate from a non-binding contract whenever this is profitable, and will renegotiate a binding contract if they can improve it by unilaterally changing their commitments while remaining acceptable by the partner. There is an inherent strategic factor in these examples that cannot be analysed using the above fully cooperative models which neglect how utilities are formed and the consequences of a player unilateral deviation from the actions they control.


To study those non-cooperative aspects and to prevent such deviations, we propose the novel framework of \textit{matching games} which supposes that members of a couple obtain their utilities endogenously as the output of a strategic game they must simultaneously play at the time of the agreement. 


More precisely, we extend the literature on matching models by supposing that individual members of a couple $(d,h) \in D \times H$, obtain their payoffs as the output of a strategic game $G_{d,h} = (X_d , Y_h , f_{d,h},$ $g_{d,h})$, that they have to play, where $X_d$ is $d$'s action/strategy set, $Y_h$ is $h$'s action/strategy set, and $f_{d,h}, g_{d,h} : X_d \times Y_h \rightarrow \mathbb{R}$ are the utility functions of $d$ and $h$, respectively. Hence, if $d$ and $h$ are matched, $d$ chooses to play $x_d$ and $h$ chooses to play $y_h$, $d$'s and $h$'s final utilities are $f_{d,h}(x_d,y_h)$ and $g_{d,h}(x_d,y_h)$, respectively. An allocation of the matching game is a triple $\pi = (\mu, \Vec{x},\Vec{y})$ with $\mu$ a matching between $D$ and $H$, $\vec{x} = (x_d)_{d\in D} \in \prod_{d \in D} X_d$ a strategy profile for all agents in $D$, and $\vec{y}=(y_h)_{h \in H}\in \prod_{h \in H} Y_h$ a strategy profile for all agents in $H$. For example, a matching problem with linear transfers can be represented by a family of constant-sum games with strategies sets $X_d=Y_h=\mathbb{R}_+$, and payoff functions $f_{d,h}(x_d,y_h)=-x_d+y_h+a_{d,h}$ and $g_{d,h}(x_d,y_h)=x_d-y_h+b_{d,h}$, with $a_{d,h}$ and $b_{d,h}$ representing the monetary utility of being with the partner when there is no transfer.\footnote{The game of transfers is a constant sum game as the sum of payoffs $f_{d,h}(x_d,y_h)+g_{d,h}(x_d,y_h) = a_{d,h}+b_{d,h}$ is independent on $x_d$ and $y_h$. Strategically, this is equivalent to a zero-sum game and it is a particular instance of a strictly competitive game.}



From this model, we study two possible cases. The first one considers that matched couples \textbf{cannot commit} due to, e.g., the lack of binding contracts over the actions they intend to play. In this case, for the players not to deviate from the intended actions, these last must constitute a Nash equilibrium of their two-player game. Thus, an allocation $\pi = (\mu,\Vec{x},\Vec{y})$ will be called \textit{pairwise-Nash stable} if (a) all matched couples play a Nash equilibrium of their game (Definition \ref{def:internally_Nash_stable_matching_profile}) and (b) no pair of agents $(d,h)$ can jointly deviate to some Nash strategy profile $(x'_{d},y'_{h})$ in their game $G_{d,h}$ Pareto improving their payoffs (Definition \ref{def:externally_Nash_stable_matching_profile}). This last condition is an analog of Gale-Shapley's pairwise stability, with respect to Nash equilibria. Using a deferred-acceptance with competitions algorithm, it is proved that whenever all games $G_{d,h}$ admit a non-empty compact set of Nash equilibria, a pairwise-Nash stable allocation exists. 


The second studied case corresponds to the one in which \textbf{players can commit} e.g. by signing binding contracts. An allocation $(\mu,\Vec{x},\Vec{y})$ is called \textit{pairwise stable} (Definition \ref{def:externally_stable_matching_profile}) if no pair of agents $(d,h) \in D\times H$ can jointly deviate to some strategy profile $(x'_{d},y'_{h})$ in their game $G_{d,h}$ that Pareto improves their payoffs. The same deferred-acceptance with competitions algorithm allows us to prove that, whenever all strategic games $G_{d,h}$ have compact Pareto-optimal strategy sets and continuous payoff functions, the matching game admits a pairwise stable allocation. As each player is responsible of its own decisions and is rational, their commitments  must be chosen optimally, whenever the partner still agree to match and contract with them. 
A pairwise stable allocation $(\mu,\Vec{x},\Vec{y})$ is \textit{renegotiation proof} (Definition \ref{def:internally_stable_matching_profile}) if 
for each matched couple $(d,h)$, fixing $y_h$, $x_d$ maximizes $d$'s payoff under $h$'s participation constraint, and vice-versa. 

Solution concepts mixing cooperative and non-cooperative aspects are common on the literature of network formation games: fixing the network, players' actions must maximize their payoffs, and for each link in the network, both players must agree to form that link (see \cite{jackson1996strategic} or \cite{bich2017existence}). The concept of renegotiation-proofness has received particular interest in infinitely repeated games and mechanism design \citep{abreu1991perspective,abreu1993renegotiation,asheim2009renegotiation,farrell1989renegotiation,pearce1987renegotiation,rubinstein1992renegotiation,van1989renegotiation}. Our notion is more closely related to \cite{dewatripont1988commitment} who considers a setting in which agents can achieve agreement on contracts due to the existence of third parties.

We define a class of strategic games, called feasible games (Definition \ref{def:feasible_game}), which admit constrained Nash equilibria for all possible levels of participation constraints (Definition \ref{def:constrained_Nash_eq}) and prove that: (a) when all games $G_{d,h}$ are feasible, a new algorithm, if it converges, reaches a pairwise stable and renegotiation-proof allocation and (b) this new algorithm converges when all games are constant-sum, strictly competitive, potential or infinitely repeated. As strictly competitive games are feasible, Demange-Gale's results are recovered and refined (e.g. our algorithm not only converge to a pairwise stable allocation \`a la Demange-Gale, but to a renegotiation-proof one).

The article is structured as it follows. Section \ref{sec:model}  introduces the matching game model, the concept of pairwise stability with respect to a set of constrains, and proves existence of pairwise stable allocations thanks to a deterence-acceptance with competition algorithm.  \Cref{chapter:one_to_one_matching_games} leverages the results in \Cref{sec:model} to study matching games without commitment, while \Cref{chapter:external_stability_one_to_one_commitment} does the same for matching games with commitment. \Cref{chapter:internal_stability_one_to_one_commitment} introduces renegotiation proofness, a novel refinement, and studies its characterization, as well as the existence and algorithmic computation of renegotiation proof pairwise stable allocations. The last section concludes, gives future research lines, and discusses some open problems.

\section{Matching games}\label{sec:model}

We consider two finite sets of agents $D$ and $H$ that we refer to as doctors and hospitals. The cardinalities of $D$ and $H$ are denoted $|D|$ and $|H|$\footnote{We suppose a general model where $D$ and $H$ can have different size.}, respectively, and typical elements are denoted $d\in D$ and $h\in H$.
\medskip

\begin{definition}\label{def:matching}
\textup{A \textbf{matching} $\mu$ is a mapping between $D$ and $H$ where each agent on one side is matched to at most one agent on the other side. If $d\in H$ and $h \in H$ are matched in $\mu$, we will denote indistinctly $h = \mu(d)$ or $d = \mu(h)$.}
\end{definition}  
\medskip

When a couple $(d,h) \in D \times H$ forms, they get their payoffs as the output of a strategic game $$G_{d,h} := (X_d,Y_h, f_{d,h},g_{d,h}),$$
where $X_d, Y_h$ are the strategy sets of doctor $d$ and hospital $h$, respectively, and $f_{d,h}, g_{d,h} : X_d \times Y_h \to \mathbb{R}$ are their payoff functions. Remark agents' strategies do not depend on other agents. In particular, and for simplicity, we assume that agents can play the same strategies in all of their games. Denote by $$X:= \prod_{d \in D} X_d \text{ and } Y:= \prod_{h \in H} Y_h,$$ the spaces of strategy profiles. Further assumptions (such as compactness and continuity) over the strategy sets and payoff functions will be specified later. 
\medskip

\begin{definition}\label{def:men_and_women_matching_profiles}
\textup{A \textbf{doctors action profile} (resp. \textbf{hospitals action profile}) is a vector $\vec{x} = (x_1,...,x_{|D|}) \in X$ (resp. $\vec{y} = (y_1,...,y_{|H|})\in Y$). An \textbf{allocation} is a triple $\pi=(\mu,\vec{x},\vec{y})$ in which $\mu$ is a matching, $\vec{x}$ is a doctors action profile and $\vec{y}$ is a hospitals action profile.}
\end{definition}  
\medskip

It is natural to suppose that each agent has a utility of being single and that this utility is her/its \textbf{individually rational payoff} (IRP): she/it never accepts be matched with a partner  if the payoff of their game is less that her/its IRP. Formally, each doctor $d \in D$ (resp. hospital $h \in H$) will be attributed a value $\underline{f}_d \in \mathbb{R}$ (resp. $\underline{g}_h \in \mathbb{R}$), which constitutes the utility of being single. Given an allocation $\pi=(\mu,\vec{x},\vec{y})$, the \textbf{players utilities} are defined by,
\begin{align*}
\forall d \in D,\ f_d(\pi)&:= \left\{\begin{array}{cc}
    f_{d,\mu(d)}(x_d,y_{\mu(d)}) & \text{if $d$ is matched},\\
    \underline{f}_d & \text{otherwise}.
\end{array} \right.\\ 
\forall h \in H,\ g_h(\pi)&:= \left\{\begin{array}{cc}
     g_{\mu(h),h}(x_{\mu(h)},y_h)  &  \text{if $h$ is matched},\\
    \underline{g}_h & \text{otherwise}.
\end{array}\right.
\end{align*}

We extend the agent sets $D$ and $H$ by adding to each of them the so-called \textbf{empty players} $d_0,h_0$ who, in our future algorithms, will respect the following rules: (1) empty players have empty strategy sets and null payoff functions, (2) they can be matched with as many agents as needed but never between them, and (3) any player matched with an empty player receives her/its IRP as payoff. We denote $D_0 := D \cup \{d_0\}$ and $H_0 := H \cup \{h_0\}$.
\medskip

\begin{definition}
\textup{A tuple $\Gamma = (D_0, H_0, \{G_{d,h}: (d,h) \in D \times H\}, \underline{f}, \underline{g})$ will be called a \textbf{matching game}.}
\end{definition}
\medskip

To illustrate our model, we consider the following leading examples.
\medskip

\begin{example}\label{ex:prisoners_dilemma_example}
\textup{Consider a matching game with only one agent $d$ and one agent $h$, both having strictly positive IRPs $\underline{f}_d = \underline{g}_h = \delta > 0 $. Suppose that, if they agree to match, they play the following prisoners' dilemma $G$,
\begin{table}[H]
\centering
\begin{tabular}{cccc}
     & \multicolumn{3}{c}{Agent h}\\ \noalign{\vskip 2mm}
     \multirow{4}{*}{Agent d  } & & Cooperate & Betray \\
     \cmidrule{2-4}
     & Cooperate & $2\delta, 2\delta$ & $-\delta,3\delta$ \\ 
     \cmidrule{2-4}
     & Betray & $3\delta,-\delta$ & $0,0$
\end{tabular}
\hspace{1.5cm}
\end{table}
Notice that matching and playing the Nash equilibrium of $G$ is Pareto-dominated by remaining single. As will be seen, to have a stable allocation in which agents match, players should commit to cooperate with a sufficiently high probability.} \qed
\end{example}
\medskip


\begin{example}\label{ex:one_couple_transfer_game_example}
\textup{Consider a matching game with only one agent $d$ and one agent $h$, both having strictly positive IRPs $\underline{f}_d = \underline{g}_h = \delta > 0$. Suppose that, if they agree to match, they play a constant-sum game 
\begin{align*}
&G = (\mathbb{R}_+,\mathbb{R}_+,f_{d,h},g_{d,h}), \text{ such that for any } x,y \geq 0,\\
&f_{d,h}(x,y) = 10\delta - x + y,\\
&g_{d,h}(x,y) = x - y.
\end{align*}
Game $G$ corresponds to a transfer game in which each player increases her payoff thanks to the transfer of the partner, and decreases it due to her transfer. Since positive transfers are always strictly dominated by the null transfer, the only Nash equilibrium of $G$ is $(x,y) = (0,0)$. In that case, $h$ payoff is 0 and she is better off being single than accepting the matching and receiving a zero payoff. On the other hand, being single is Pareto-dominated by being matched and agent $d$ offering any monetary transfer $x \in [\delta, 9\delta]$ to agent $h$. For that proposal to be acceptable by $h$, the monetary transfer promised must be enforceable/credible (the commitment assumption).} \qed
\end{example}
\medskip

\begin{example}\label{ex:auction}
\textup{Consider a market with a set of $n$ buyers $N = \{1,...,n\}$, and one seller $h$ of an indivisible good. Buyers in $N$ have strictly positive values for the good, $(v_d)_{d \in N} \subseteq \mathbb{R}_+$. The seller has a reservation price $c$, strictly positive as well, and all agents have null IRPs. Since the good is indivisible, only one buyer can be matched with the seller. If a couple buyer-seller $(d,h)$ is created and the monetary transfers $(x_d, y_h)$\footnote{$x_d\geq 0$ means buyer pays to the seller, $y_h\geq 0$ means the seller pays to the buyer.} is agreed and enforceable, the item is sold from $h$ to $d$ at the price $p = x_d - y_h$. Under this scenario, the players' utilities are,
\begin{align*}
  &f_d(x_d,y_h) = v_d - p = v_d - x_d + y_h,\\
  &g_h(x_d,y_h) = p - c = x_d - y_h - c,\\
  &f_{d'} = 0 \text{ for any other buyer $d' \in N\setminus\{d\}$}.
\end{align*}
For simplicity, we assume $v_1 \geq v_2 \geq ... \geq v_n$. Moreover, we take $c \leq v_1$ as, otherwise, notice it is not rational for the agents to engage in a transaction.} \qed
\end{example}
\medskip

The first condition we will ask to any allocation is \textit{individual rationality}.
\medskip

\begin{definition}\label{eq:individual_nash_rationality}
\textup{An allocation $\pi = (\mu,\vec{x},\vec{y})$ is \textbf{individually rational} if for any $d \in D$ and any $h \in H$, $f_d(\pi) \geq \underline{f}_d$ and $g_h(\pi) \geq \underline{g}_h$.}
\end{definition}
\medskip

Pairwise stability will depend on the strategies the players are allowed to play. We will use the following general definition.
\medskip

\begin{definition}\label{def:external_stability_with_respect_to_family}
An individually rational allocation will be \textbf{pairwise stable with respect to the family} $\mathcal{C} := (C_{d,h}: d \in D, h \in H)$, where $C_{d,h} \subseteq X_d \times Y_h$ if no pair $(d,h)$ can match together, play a strategy profile in their set $C_{d,h}$, and strictly increase their payoffs.
\end{definition}
\medskip

Under mild assumptions we can show the existence of pairwise stable allocations.
\medskip

\begin{theorem}\label{teo:existence_ext_stable_alloc_general_case}
Suppose that all the sets in the family $\mathcal{C}$ are non-empty and compact, and that agents' payoff functions are continuous. Then, there exists a pairwise stable allocation with respect to the family $\mathcal{C}$.
\end{theorem}
\medskip

\Cref{teo:existence_ext_stable_alloc_general_case} is proved in two steps:
\begin{itemize}
    \item[1.] We design a deferred-acceptance with competitions (DAC) algorithm to compute an $\varepsilon$-pairwise stable allocation (\Cref{def:epsilon_externally_stable_matching_profile}). 
    \item[2.] As the sets $C_{d,h}$ are compact and the payoff functions are continuous, accumulation points as $\varepsilon \to 0$ exist and any of them will be a pairwise stable allocation.
\end{itemize}

The pseudo-code of the algorithm used in step 1 (Algorithm \ref{Algo:Propose_dispose_algo_simple_case}) is similar to the first of the two algorithms proposed by \cite{demange1986multi}. Our deferred-acceptance with competitions algorithm takes one of the sides (the doctors for the rest of the article) and asks its unmatched agents to propose, one by one, a \textit{contract} from the family $\mathcal{C}$ to their most preferred option. The proposal is computed such that the proposed agent is always better off by accepting it. If the proposed hospital is already matched, a competition between the two doctors is triggered. The winner remains and the loser goes back to the set of unmatched agents. Therefore, each iteration of Algorithm \ref{Algo:Propose_dispose_algo_simple_case} has two phases: a \textbf{proposal} and a \textbf{competition}. 

\begin{algorithm}[ht]
\textbf{Input}: $\Gamma = (D_0,H_0, (G_{d,h} : (d,h) \in D \times H), \underline{f}, \underline{g})$ a matching game, $\varepsilon \bi 0$

Set $D' \leftarrow D$ as the set of single doctors, and $g_h(\pi) \leftarrow \underline{g}_h, \forall h \in H$

\While{$D' \neq \emptyset$}{
Let $d \in D'$. Compute his \textbf{optimal proposal}  
$$(h,x,y) \in \argmax\{f_{d,h}(x,y) : g_{d,h}(x,y) \geq g_h(\pi) + \varepsilon, h \in H_0, (x,y) \in C_{d,h} \}$$ 

\If{$h$ is single}{$d$ is automatically accepted}
\Else{$d$ and $\mu(h)$ \textbf{compete} for $h$ as in a second-price auction. The winner passes to be the new partner of $h$ and the loser is included in $D'$}}
\caption{Deferred-acceptance with competitions algorithm}
\label{Algo:Propose_dispose_algo_simple_case}
\end{algorithm}

Let us explain the two phases that compose an iteration of the DAC algorithm.
\smallskip

\noindent\textbf{Proposal phase.} Let $d \in D'$ be the proposer. Given the current allocation $\pi$ (initially empty) that generates a hospitals' payoff vector $g(\pi) = (g_h(\pi))_{h \in H} = (\underline{g}_h)_{h\in H}$, $d$ computes her optimal proposal as,
\begin{align}\label{eq:problem_P_i}
    (h,x,y) \in \argmax\left\{f_{d,h}(x,y) : g_{d,h}(x,y) \geq g_h(\pi) + \varepsilon, h \in H_0, (x,y) \in C_{d,h}\right\}.
\end{align}
The solution of Problem (\ref{eq:problem_P_i}) consists in $h$, the most preferred hospital of doctor $d$, and $(x,y) \in C_{d,h}$, the strategy profile that $d$ proposes to $h$ to play. As an abuse of notation, we may call optimal proposal only to $(x,y)$, omitting the proposed hospital $h$. Problem (\ref{eq:problem_P_i}) is always feasible as $d$ can always propose to $h_0$. If $h$ is single, $d$ is automatically accepted and the algorithm picks a new proposer in $D'$.
\medskip

\noindent\textbf{Competition phase.} If the proposed agent $h$ is matched, namely with an agent $d'$, a competition between $d$ and $d'$ starts. In the stable marriage problem, the competition is the simple comparison between the places that $d$ and $d'$ occupy in $h$'s ranking. In our case, as agents have strategies, a competition is analogous to a \textit{second-price auction}. Let $\beta_d$ be the reservation price of $d$, solution to the following problem,
\begin{align}\label{eq:problem_beta_i}
    \beta_d := \max\left\{f_{d,h'}(x,y) : g_{d,h'}(x,y) \geq g_{h'}(\pi) + \varepsilon, h' \in H_0\setminus\{h\}, (x,y) \in C_{d,h'}\right\}.
\end{align}
Analogously, we compute $\beta_{d'}$. Reservation prices are the highest payoff that $d$ and $d'$ can get by matching with somebody else. In other words, these values represent the lowest payoffs that each agent is willing to accept to be with $h$. $d$'s bid $\lambda_d$ (and analogously for $d'$) is computed by,
\begin{align}\label{eq:problem_P_max}
    \lambda_d := \max\left\{ g_{d,h}(x,y) : f_{d,h}(x,y)  \geq \beta_d, (x,y) \in C_{d,h} \right\}.
\end{align}
The winner is the doctor with the highest bid. Finally, the winner, namely $d$, pays the second highest bid. Formally, $d$ solves,
\begin{align}\label{eq:problem_P_new}
    \max\left\{ f_{d,h}(x,y): g_{d,h}(x,y) \geq \lambda_{d'}, (x,y) \in C_{d,h}\right\}.
\end{align}
The loser is included in $D'$ and a new proposer is chosen.
\medskip

\begin{remark}
\textup{A defeated doctor cannot propose right away to the same hospital as she is unable to increase the hospital's payoff by $\varepsilon$. This is crucial for the convergence of the DAC algorithm.}
\end{remark}
\medskip

\begin{remark}
\textup{The output of the DAC algorithm (Algorithm \ref{Algo:Propose_dispose_algo_simple_case}) corresponds to an $\varepsilon$-approximation. This is in line with the matching literature with transfer \citep{demange1986multi,hatfield2005matching,kelso1982job}. The problem of computing a $0$-stable allocation remains open, in our case as well as in the literature with transfers.}
\end{remark}  
\medskip

We focus in proving that our DAC algorithm (Algorithm \ref{Algo:Propose_dispose_algo_simple_case}) ends in finite time and its output corresponds to an $\varepsilon$-pairwise stable allocation (\Cref{def:epsilon_externally_stable_matching_profile}).
\medskip

\begin{theorem}\label{teo:propose_dispose_algo_ends_in_finite_time}
The DAC algorithm ends in finite time.
\end{theorem}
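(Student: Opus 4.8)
The plan is to run a monotonicity (potential-type) argument driven by women's payoffs. First I would use the standing assumptions: since every strategy space $X_i \times Y_j$ is compact and every $V_{i,j}$ is continuous, each $V_{i,j}$ attains a maximum on its domain, and taking the maximum over the finitely many couples yields a single finite constant $\bar V := \max_{i \in M,\, j \in W}\, \max_{(x,y) \in X_i \times Y_j} V_{i,j}(x,y)$ that dominates every payoff a woman can ever be offered. At any moment of the run, a woman's recorded value $v_j$ equals either her initial value $\underline v_j$ or some attained $V_{i,j}(x,y)$, so it always lies in $(-\infty, \bar V]$.

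Next I would classify the iterations of the while-loop of Algorithm \ref{Algo:Propose_dispose_algo} into two types. An iteration is of the first type if the solution of $P_i(v)$ is $j = j_0$: then $i$ is removed from $M'$ and, by the remark preceding the theorem, never re-enters the market, so there are at most $|M|$ iterations of this type over the whole run. Every other iteration is a genuine proposal to a real woman $j \in W$ (an automatic acceptance, a replacement, or a rejection that recomputes the incumbent's contract), and by Proposition \ref{prop:women_payoff_are_increasing} each such iteration raises the corresponding $v_j$ strictly, by at least $\varepsilon$.

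I would then bound the second type by boundedness plus strict increase. Because $v_j$ only ever increases, starts at $\underline v_j$, remains $\le \bar V$, and jumps by at least $\varepsilon$ whenever $j$ receives a proposal, woman $j$ can receive at most $\big\lceil (\bar V - \underline v_j)/\varepsilon \big\rceil$ proposals in total. Summing over the finitely many women gives a finite bound $K := \sum_{j \in W} \big\lceil (\bar V - \underline v_j)/\varepsilon \big\rceil$ on the number of second-type iterations. Hence the total number of iterations is at most $K + |M| < \infty$, so the algorithm halts.

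The only delicate point, and the place where the earlier propositions carry the weight, is verifying that every iteration which is not a match with $j_0$ strictly increases some woman's payoff by at least $\varepsilon$, so that no iteration is wasted. This is exactly the content of Proposition \ref{prop:women_payoff_are_increasing}, which rests in turn on the fact (Proposition \ref{prop:man_proposer_guarantees_improvement_of_epsilon} and the discussion of automatic replacement) that a winning proposer can always offer at least $v_j + \varepsilon$, together with the rule that a defeated man does not immediately re-propose to the same woman. Once this per-iteration strict increase is established, the boundedness of payoffs closes the argument at once; I would therefore make sure to confirm that the rejection sub-case, where the incumbent's contract is recomputed via $(P_{i',j}^{\text{new}}(\lambda_{i}))$, genuinely falls under the scope of Proposition \ref{prop:women_payoff_are_increasing}.
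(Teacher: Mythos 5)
Your argument is correct and is essentially the paper's own proof, just written out in more detail: the paper likewise combines boundedness of payoffs (from compactness and continuity) with Proposition \ref{prop:women_payoff_are_increasing} to conclude termination, and the explicit bound $\sum_{j\in W}\lceil(\bar V-\underline v_j)/\varepsilon\rceil$ you derive is the same one the paper records in its complexity subsection. Your extra care in separating the $j_0$-iterations and in checking that the rejection sub-case falls under Proposition \ref{prop:women_payoff_are_increasing} is consistent with, not a departure from, the paper's reasoning.
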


\begin{proof}
Since the strategy sets are compact and the payoff functions are continuous, they are bounded. By construction, hospitals' payoffs strictly increase by $\varepsilon\bi 0$ with every proposal. Thus, the algorithm ends in a finite number of iterations.
\end{proof}  

\begin{remark}
\textup{Due to the monotonicity of hospitals' payoffs, once a doctor matches with $h_0$, she leaves the market and remains single forever.}
\end{remark}
\medskip

Forcing the doctors to increase hospitals' payoffs by (at least) $\varepsilon$ with every proposal guarantees the finiteness of the algorithm. However, we lose accuracy as the algorithm outputs an $\varepsilon$-pairwise stable allocation.
\medskip

\begin{definition}\label{def:epsilon_externally_stable_matching_profile}
\textup{Let $\pi = (\mu,\vec{x},\vec{y})$ be an allocation and $\varepsilon\bi 0$ fixed. A pair $(d,h) \in D \times H$ is an $\varepsilon$\textbf{-blocking pair} if there exits a strategy profile $(x'_d,y'_h) \in C_{d,h}$ such that 
$$f_{d,h}(x'_d,y'_h) > f_d(\pi) + \varepsilon \text{ and } g_{d,h}(x'_d,y'_h) > g_h(\pi)  + \varepsilon.$$
An allocation is $\varepsilon$\textbf{-pairwise stable} with respect to the family $\mathcal{C}$ if it is $\varepsilon$-individually rational (no agent gets $\varepsilon$ less than her/its IRP) and does not have any $\varepsilon$-blocking pair.}
\end{definition}

To prove the correctness of the DAC algorithm (Algorithm \ref{Algo:Propose_dispose_algo_simple_case}) we need two technical results.
\medskip

\begin{lemma}\label{prop:man_proposer_guarantees_improvement_of_epsilon}
Let $d$ be a doctor proposing to a hospital $h$, currently having a payoff $g_h$. Let $\lambda_d$ be $d$'s bid. Then, it always holds that $\lambda_d \geq g_h + \varepsilon$.
\end{lemma}  

\begin{proof}
Since $d$ proposed to $h$, there exists a strategy profile $(x,y)$ such that the triple $(h,x,y)$ is solution to Problem (\ref{eq:problem_P_i}). Therefore, the triple $(x,y, g_h + \varepsilon)$ is a feasible solution of Problem (\ref{eq:problem_P_max}). Thus, $\lambda_d \geq g_h + \varepsilon$.
\end{proof}  

\begin{lemma}\label{prop:winner_man_plays_feasible_payoff}
Let $(x,y)$ be the solution of Problem (\ref{eq:problem_P_new}). Then, $g_{d,h}(x,y)$ is always upper bounded by $\lambda_d$, $d$'s bid during the competition.
\end{lemma}  

\begin{proof}
Let $d$ (proposer) and $d'$ (current partner) be two doctors competing for $h$ and suppose, without loss of generality, that $d$ wins. Let $(x^*,y^*)$ be $d$'s optimal proposal and $(\lambda_d, \hat{x}, \hat{y})$ be the solution of Problem (\ref{eq:problem_P_max}) for player $d$. Then, the pair $(\hat{x}, \hat{y})$ is a feasible solution of $d$'s Problem  (\ref{eq:problem_P_new}), as $g_{d,h}(\hat{x}, \hat{y}) = \lambda_d \bi \lambda_{d'}$. Consider any strategy profile $(x',y')$ such that $g_{d,h}(x',y') \bi g_{d,h}(\hat{x}, \hat{y})$. If $(x',y')$ satisfies $f_{d,h}(x',y') \geq \beta_{d}$, we obtain a contradiction as $(\lambda_d, \hat{x}, \hat{y})$ is solution of Problem (\ref{eq:problem_P_max}) for player $d$. Therefore, the solution $(x,y)$ of Problem (\ref{eq:problem_P_new}) satisfies $g_{d,h}(x,y) \leq g_{d,h}(\hat{x}, \hat{y}) = \lambda_d$.
\end{proof}  

We are ready to prove the correctness of the DAC algorithm (Algorithm \ref{Algo:Propose_dispose_algo_simple_case}).
\medskip

\begin{theorem}\label{teo:simple_propose_dispose_algo_is_correct}
The allocation $\pi$, output of the DAC algorithm, is $\varepsilon$-pairwise stable with respect to the family $\mathcal{C}$.
\end{theorem}

\begin{proof}
Let $\pi := (\mu,\vec{x},\vec{y})$ be the output of Algorithm \ref{Algo:Propose_dispose_algo_simple_case} and suppose it is not $\varepsilon$-pairwise stable. Let $(d,h)$ be an $\varepsilon$-blocking pair of $\pi$, and suppose, without loss of generality, that $h \neq \mu(d)$. Let $T$ be the last iteration at which $d$ proposed. In particular, at time $T$, doctor $d$ proposed to $\mu(d)$ and not to $h$, and for any posterior proposal to $\mu(d)$, $d$ won the competition. Since $d$ won all the posterior competitions, in particular, by \Cref{prop:man_proposer_guarantees_improvement_of_epsilon,prop:winner_man_plays_feasible_payoff},
$$f_d(\pi) \geq \max\{f_{d,h'}(\bar{x},\bar{y}) : g_{d,h'}(\bar{x},\bar{y}) \geq g_{h'}(\pi) + \varepsilon , h' \in H_0, (\bar{x},\bar{y}) \in C_{d,h'} \},$$
as $f_d(\pi)$ cannot be lower than any of the reservation prices computed by $d$ during each of her competitions. Since $(d,h)$ is an $\varepsilon$-blocking pair, there exists $(\bar{x},\bar{y}) \in C_{d,h}$ such that $f_{d,h}(\bar{x},\bar{y}) \bi f_d(\pi)+ \varepsilon$ and $g_{d,h}(\bar{x},\bar{y}) \bi g_h(\pi) + \varepsilon$. Then
$$f_d(\pi) \sm \max\{f_{d,h'}(\bar{x},\bar{y}) : g_{d,h'}(\bar{x},\bar{y}) \geq g_{h'}(\pi) + \varepsilon , h' \in H_0, (\bar{x},\bar{y}) \in C_{d,h'} \},$$
which is a contradiction.
\end{proof}

From the existence of $\varepsilon$-pairwise stable allocations, we are finally able to prove the existence of $0$-pairwise stable allocations (\Cref{teo:existence_ext_stable_alloc_general_case}) passing through the compactness of the sets in $\mathcal{C}$, the continuity of payoff functions, and the finiteness of players. 

\begin{proof}{\textbf{$0$-Pairwise stable allocations existence (\Cref{teo:existence_ext_stable_alloc_general_case}}).}
Consider $\varepsilon \bi 0$. Let $\pi_{\varepsilon} := (\mu_{\varepsilon}, \vec{x}_{\varepsilon}, \vec{y}_{\varepsilon})$ be the output of the DAC algorithm (Algorithm \ref{Algo:Propose_dispose_algo_simple_case}). Thus, $\pi_{\varepsilon}$ is an $\varepsilon$-pairwise stable allocation with respect to the family $\mathcal{C}$ (\Cref{teo:simple_propose_dispose_algo_is_correct}). Consider a sequence of these profiles $(\pi_{\varepsilon})_{\varepsilon}$ with $\varepsilon$ going to $0$, and a subsequence $(\pi_{\varepsilon_k})_k$ such that $(\vec{x}_{\varepsilon_k}, \vec{y}_{\varepsilon_k})_k$ converges to a fixed strategy profile $(\vec{x},\vec{y})$, which exists as the sets $(C_{d,h} : (d,h) \in D \times H)$ are compact sets.

Since there is a finite number of possible matchings, consider a subsubsequence $(\pi_{\varepsilon_{k_l}})_l$ such that $\mu_{\varepsilon_{k_l}} = \mu, \forall l \in \mathbb{N}$, with $\mu$ a fixed matching. As $(\vec{x}_{k_l}, \vec{y}_{k_l}) \to (\vec{x},\vec{y})$ when $l \to \infty$, the sequence $\pi_{k_l}$ converges to $ \pi := (\mu,\vec{x},\vec{y})$, with $\mu$ a matching and $(\vec{x},\vec{y})$ a strategy profile. Moreover, as $\varepsilon_{k_l}$ goes to $0$, as for each $l$ the allocation $\pi_{k_l}$ is $\varepsilon_{k_l}$-pairwise stable, as the payoff functions are continuous, and as the definition of pairwise stability only includes inequalities, $\pi$ is pairwise stable.
\end{proof}  

In the following sections we will leverage \Cref{teo:existence_ext_stable_alloc_general_case} to study the models with and without commitment. 

\section{No commitment}\label{chapter:one_to_one_matching_games}

This section is devoted to presenting the model of \textit{matching games without commitment}, arguably the most natural one (from a game-theoretic perspective) to begin the study of the strategies supporting the literature stable outcomes.

The main result of this section is twofold. Matching games without commitment form a simple and rich model that requires mild assumptions about two-player games to ensure the existence of stable outcomes. However, the model is too restrictive to capture even the simplest models of matching with endogenous utilities, such as matching with transfers.

\subsection{Pairwise-Nash stability}\label{sec:external_and_internal_Nash_stability}

Suppose $d$ and $h$ agree to match and intend to play, respectively, the actions $x_d$ and $y_h$. If no specific reason forces them to respect that agreement (no binding contracts, no possibility of future punishment in repeated interaction, etc) then, for $(x_d,y_h)$ to be stable, it must constitute a Nash equilibrium of $G_{d,h}$.    
\medskip

\begin{definition}\label{def:internally_Nash_stable_matching_profile}
\textup{An allocation $\pi = (\mu,\vec{x},\vec{y})$ is \textbf{Nash stable} if for any matched couple $(d,h) \in \mu$, $(x_d,y_h)\in \text{N.E}(G_{d,h})$, i.e, $(x_d,y_h)$ is a Nash equilibrium of $G_{d,h}$.}
\end{definition}
\medskip

As players can remain single or match a better partner, a pairwise stability condition \`{a} la Gale-Shapley must also be satisfied for an allocation to be stable.
\medskip

\begin{definition}\label{def:externally_Nash_stable_matching_profile}
\textup{An individually rational allocation $\pi = (\mu,\vec{x},\vec{y})$ is \textbf{pairwise-Nash stable} if it is pairwise stable with respect to the family
$$\mathcal{C} := \{\mathrm{N.E}(G_{d,h}), (d,h)\in D\times H \},$$
where $\mathrm{N.E}(G_{d,h})$ is the set of Nash equilibria of the game $G_{d,h}$.}
\end{definition}
\medskip

Pairwise stability asks for the non-existence of (Nash-)\textit{blocking pairs}, i.e., there is no pair $(d,h) \in D \times H$, that can be paired and play a Nash equilibrium in their game strictly improving their payoffs in $\pi$.
\medskip


\begin{theorem}\label{teo:Existence_Nash_Stable}
If for any couple $(d,h)$ the set of Nash equilibria of the game $G_{d,h}$ is non-empty and compact,\footnote{Whenever a game $G_{d,h}$ has convex and compacts strategy sets, and utility functions are own-quasi concave and continuous, or discontinuous but better-reply-secure \cite{reny1999existence}, the set of Nash equilibria is non-empty and compact.} and the payoff functions are continuous, then the set of Nash-pairwise stable allocations is also non-empty and compact.
\end{theorem}
\medskip

\Cref{teo:Existence_Nash_Stable} is a corollary of \Cref{teo:existence_ext_stable_alloc_general_case}. Indeed, pairwise-Nash stability is guaranteed by \Cref{teo:existence_ext_stable_alloc_general_case}, while Nash stability holds as all couples are restricted to play only Nash equilibria. 
We end this section by studying the Nash and pairwise-Nash stable allocations of our leading examples.
\medskip

\noindent\textbf{\Cref{ex:prisoners_dilemma_example}}. The only Nash equilibrium of the prisoners' dilemma is to play $(B,B)$. Then, as both players are better off being single than playing the equilibrium, the only Nash-pairwise stable allocation is the one in which players do not match.\qed
\medskip


\noindent\textbf{\Cref{ex:one_couple_transfer_game_example}}.
The only Nash equilibrium of the constant-sum game is $x = y = 0$, as any positive transfer $x >0$ (resp. $y >0$) is a strictly dominated strategy for $d$ (resp. for $h$). Thus, for any Nash stable allocation $\pi$ in which the players are matched, their payoffs are $f_d(\pi) = 10\delta$ and $g_h(\pi) = 0$. As this allocation is not individually rational for $h$, the only Nash-pairwise stable allocation is the one in which agents do not match.\qed
\medskip

\noindent\textbf{\Cref{ex:auction}}. Similarly to the previous example, the only Nash equilibrium is to pay $0$ for the good. However, as the seller has a null IRP, it is not individually rational for her to sell the good without receiving a positive payment. Therefore, the only Nash-pairwise stable allocation is the one in which nobody buys the good. \qed
\medskip

The Nash-pairwise stable allocations found in these examples are not Pareto-optimal. In \Cref{ex:prisoners_dilemma_example} players can match, cooperate, and end up both better off. In \Cref{ex:one_couple_transfer_game_example}, $d$ can propose $x \geq \delta$ to $h$ and both agents end up better off. In \Cref{ex:auction}, the buyer with the higher valuation can pay any price between her valuation and the one of the seller, and both, buyer and seller, end up better off. These deviations are credible and stable only if both agents believe that the other one will honor her promise (they can commit). 

Commitment is a classical assumption in matching models with transfers \citep{shapley1971assignment,rochford1984symmetrically,demange1985strategy,roth1988interior,bennett1988consistent} and matching with contracts \citep{hatfield2005matching}. Therefore, the rest of this article is devoted to study the model of \textit{matching games under commitment}.

\section{Commitment}\label{chapter:external_stability_one_to_one_commitment}

\subsection{Pairwise stability}\label{sec:external_stability_existence}

Suppose that matched partners within a couple can commit to playing a specific action profile. This allows them to enlarge their options well beyond their set of Nash equilibria. This leads to the following stronger stability notion.
\medskip

\begin{definition}\label{def:externally_stable_matching_profile}
\textup{An individually rational allocation $\pi=(\mu,\vec{x},\vec{y})$ is \textbf{pairwise stable} if it is pairwise stable with respect to the familiy
$$\mathcal{C} = \{X_d\times Y_h, (d,h)\in D\times H\},$$}
i.e., with respect to the whole set of strategy profiles.
\end{definition}  
\medskip


Pairwise stability, compared to pairwise-Nash stability (\Cref{def:externally_Nash_stable_matching_profile}), allows the players to choose any feasible strategy profile without being restricted to a Nash equilibrium (or to a non-dominated strategy). Let us see the impact of such a change in our leading examples. 
\medskip

\noindent\textbf{\Cref{ex:prisoners_dilemma_example}.} Recall the prisoners' dilemma matching game in which two agents $d$ and $h$ with positive IRPs $\delta$ can match and play,
\begin{table}[H]
\centering
\begin{tabular}{cccc}
     & \multicolumn{3}{c}{Agent h}\\\noalign{\vskip 2mm}
     \multirow{3}{*}{Agent d  } & & Cooperate & Betray \\
     \cmidrule{2-4}
     & Cooperate & $2\delta, 2\delta$ & $-\delta,3\delta$ \\ 
     \cmidrule{2-4}
     & Betray & $3\delta,-\delta$ & $0,0$
\end{tabular}
\hspace{1.5cm}
\end{table}

Matching and playing the Nash equilibrium of the game is not individually rational. However, notice that not matching is not pairwise stable as matching and cooperating blocks it. Suppose that players get matched and play a mixed strategy $(x,y) = ((p,1-p),(q,1-q))$. Their expected payoffs are,
\begin{align*}
    f_{d,h}(x,y) &= 2\delta pq + 3\delta (1-p)q - \delta p (1-q) = \delta(3q-p), \\
    g_{d,h}(x,y) &= 2\delta pq + 3\delta p(1-q) - \delta (1-p)q = \delta(3p-q).
\end{align*}
Imposing individual rationality we get a continuum of solutions given by,
\begin{align*}
    &q \in \left[\frac{1+p}{3}, 3p-1\right],\quad p,q \in [0,1],
\end{align*}
and represented by the blue region in \Cref{fig:prisoners_dilemma}. $\square$
\begin{figure}[H]
    \centering
    \includegraphics[scale = 0.7]{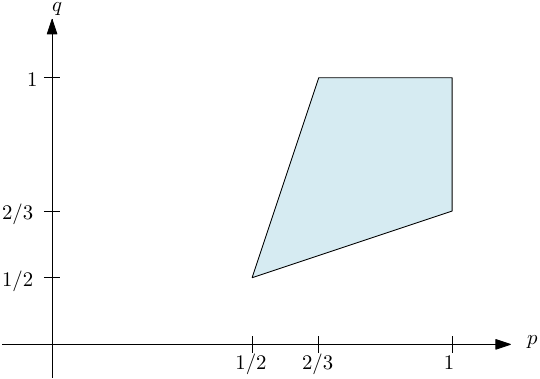}
    \caption{Stable allocations \Cref{ex:prisoners_dilemma_example}}
    \label{fig:prisoners_dilemma}
\end{figure}


\noindent\textbf{\Cref{ex:one_couple_transfer_game_example}.} Recall the transfer matching game in which two agents $d$ and $h$ with positive IRPs $\delta$ can match and play,
\begin{align*}
&G = (\mathbb{R}_+,\mathbb{R}_+,f,g), \text{ such that for any } x,y \geq 0,\\
&f_{d,h}(x,y) = 10\delta - x + y,\\
&g_{d,h}(x,y) = x - y.
\end{align*}

As we already observed, not matching, or matching and playing null transfers are not stable.  Imposing individual rationality we find that any allocation in which agents match and the transfer profile satisfies $x - y \in [\delta, 9\delta]$, is pairwise stable. This is exactly the prediction of Shapley-Shubik's and Demange-Gale's models. \qed
\medskip

\noindent\textbf{\Cref{ex:auction}.} Recall the auction example in which a seller sells an indivisible good to a set of $n$ buyers. Buyers have valuations for the good satisfying $v_1 > v_2 \geq ... \geq v_n$. The seller has valuation $c$ for the good, with $c \leq v_1$, and everybody has null IRP. If a couple buyer-seller $(d,h)$ is created and the monetary transfers $(x_d, y_h)$ is agreed, the item is sold from $h$ to $d$ at the price $p = x_d - y_h$. In particular, the utilities of these two players are,
\begin{align*}
    f_{d,h}(x_d,y_h) &= v_d - p = v_d - x_d + y_h,\\
    g_{d,h}(x_d,y_h) &= p - c = x_d - y_h - c,
\end{align*}
An allocation in this problem is pairwise stable if and only if the first buyer and the seller decide to trade at some price $p_1 \in [\max(c, v_2), v_1]$. Notice that no other buyer is willing to pay a price higher than $p_1$ as this is not individually rational. Due to this, the seller has no incentive in abandoning buyer $1$ and matching with somebody else. As, in addition, the allocation is individually rational for buyer 1 and the seller, the allocation is, indeed, pairwise stable. \qed
\medskip

As for the non-commitment setting, the general deferred-acceptance with competitions algorithm (Algorithm \ref{Algo:Propose_dispose_algo_simple_case}) can be used to compute $\varepsilon$-pairwise stable allocations for the model with commitment. 








The existence of pairwise stable allocations in the model with commitment is obtained by taking $C_{d,h} = \mathrm{PO}(G_{d,h})$, where $\mathrm{PO}(G_{d,h})$ is the set of Pareto-optimal strategy sets of game $G_{d,h}$, for any $d \in D$ and $h \in H$. 
\medskip

\begin{theorem}\label{teo:externally_stable_allocation_existence}
For any matching game with non-empty compact Pareto-optimal strategy sets and continuous payoff functions, there always exists a pairwise stable allocation.
\end{theorem}

\subsection{Shapley-Shubik and Demange-Gale models}\label{sec:shapley_shubik_gale_demange_models}

The assignment game \citep{shapley1971assignment} and the matching with transfers model \citep{demange1985strategy} are two of the main models of matching with endogenous utilities so we dedicate this section to formally establish the connection between their models and ours. The Assignment game consists of a housing market with buyers and sellers. Each seller has a house to sell and each buyer is interested in buying a house. A solution to this problem is a pair $(\mu,\vec{p})$, with $\mu$ a matching between sellers and buyers, and $\vec{p}$ a vector of positive monetary transfers from buyers to sellers. Each seller $h \in H$, has a cost of her house $c_h$, and each buyer $d \in D$, has a valuation $v_{d,h}$ for $h$'s house. If seller $h$ sells her house to $d$ at price $p_{d,h} \geq 0$, their payoffs are,
\begin{align*}
f_{d,h}(p_{d,h}) &= v_{d,h} - p_{d,h},\ \ g_{d,h}(p_{d,h}) = p_{d,h} - c_h, 
\end{align*}
respectively for buyer and seller. Demange and Gale generalized the problem by considering that whenever two agents $d$ and $h$ are paired, their payoffs are given by some strictly increasing and continuous payoff functions $\phi_{d,h}(t)$ for $d$, and $\psi_{d,h}(-t)$ for $h$, with $t \in \mathbb{R}$ being the net transfer from $d$ to $h$  ($t\geq 0$ means that $ d$ pays $t$ to $h$ and $t \leq 0$ means that $h$ pays $-t$ to $d$). 

This model can be mapped into a matching game in which all couples play \textit{strictly competitive games} (\Cref{def:strictly_competitive_game}). Formally, given $(d,h) \in D\times H$, consider the strategic game $G_{d,h} = (X_d,Y_h,f_{d,h},g_{d,h})$, with $X_d = Y_h = \mathbb{R}_+$, and,
\begin{align*}
f_{d,h}(x_d,y_h) &= \phi_{d,h}(y_h-x_d), \ \ g_{d,h}(x_d,y_h) = \psi_{d,h}(x_d-y_h).
\end{align*}
Although the strategy sets in the Demange-Gale matching game are not compact, as transfers are naturally bounded by players' valuation and their individually rational payoffs, the problem is easily compactified, satisfying the assumptions for the existence of pairwise stable allocations (\Cref{teo:externally_stable_allocation_existence}).

It is worth noting that the models in \cite{rochford1984symmetrically}, \cite{roth1988interior}, and \cite{bennett1988consistent} can also be mapped to matching games by considering couples bargaining over their transfers. For example, this can be done using the alternating offers bargaining game of \cite{osborne2019bargaining}, with players possessing the appropriate level of patience.
\medskip

\begin{remark}
As matched players play formally a two player game when they decide on their monetary transfers, and as they are rational, each player must be best-replying to the other player's strategy. But the unique Nash equilibrium of this strictly competitive games is $x^*=y^*=0$, e.g. there is no monetary transfer. In fact, Nash stability (\Cref{def:internally_Nash_stable_matching_profile}) is too demanding under  commitments, when (monetary) contract are enforceable. In the next section, we will show that even under commitment, a form of strategic rationality is still possible leading to a novel refinement of pairwise stability.
\end{remark}

\section{Commitment with renegotiation Proofness}\label{chapter:internal_stability_one_to_one_commitment}


As observed above, modeling the strategic behavior of our agents by requiring them to play a Nash equilibrium of their game contradicts pairwise stability and Pareto-optimality. Instead, we may ask each player to best reply to the partner's action but subject to the \textit{participation constraint} of the partner (who, otherwise, may prefer remain single or be with another partner). This Nash-like behavior is logically implied by a natural refinement of pairwise-stability called \textit{renegotiation proofness}.



Participation-constrained Nash equilibria, particular case of \textit{generalized Nash equilibria} \citep{harker1991generalized}, may not exist in all the class of games where a Nash equilibria do. Due to this, a new class of strategic games, called \textit{feasible games}, is considered. We show that large classes of well-know strategic games are feasible and admit participation-constrained Nash equilibria.


We design a \textit{renegotiation process} which, for any feasible matching game instance and starting from any pairwise stable allocation, outputs a pairwise-renegotiation-proof stable allocation if it converges. Finally, we prove the convergence of the algorithm for many feasible games by designing game-dependent oracles. Additional properties will rise for matching games in which couples play infinitely repeated games.

\subsection{Renegotiation proofness}\label{sec:internal_stability}

We start by introducing the following useful notation: Given a strategy profile $\vec{x} \in X_D$, and $s_d \in X_d$ a particular strategy of some fixed doctor $d \in D$, we write $(\vec{x}_{-d},s_d)$ to the strategy profile obtained when replacing $x_d$, the strategy of doctor $d$ in $\vec{x}$, by $s_d$.
\medskip

\begin{definition}\label{def:internally_stable_matching_profile}
\textup{A \textbf{pairwise stable} allocation $\pi=(\mu,\vec{x},\vec{y})$ is \textbf{renegotiation proof} if for any couple $(d,h)\in \mu$ and any potential deviation $(s_d,t_h)\in X_d \times Y_h$, it holds,
\begin{itemize}
\item[1.] If $f_{d,h}(s_d,y_h) > f_d(\pi)$ then, $(\mu,(\vec{x}_{-d},s_d),\vec{y})$ is not pairwise stable,
\item[2.] If $g_{d,h}(x_d,t_h) > g_h(\pi)$ then, $(\mu,\vec{x},(\vec{y}_{-h},t_h))$ is not pairwise stable.
\end{itemize}}
\end{definition}  

Condition 1 means that no matched doctor $d$ can profitably deviate in actions without breaking the pairwise stability of the allocation or, in other words, without creating a blocking pair or violating the IRP of the partner. In terms of contract theory, an allocation will be renegotiation proof if any alternative proposal in the advantage of the deviating agent from the agreed action profile is rejected by the partner as she will prefer her outside option (becoming single or changing of partner). 

Renegotiation proofness implies a Nash equilibrium condition subject to a participation constraint: players must play strategies that maximize their payoff under the constraint that the partners still agree to sign the contract (\Cref{teo:int_stability_is_equivalent_to_CNEs}).


An interesting family of strategic games in which the output of the deferred-acceptance with competitions algorithm is not only pairwise stable but also renegotiation proof is the class of \textbf{common interest games}, in which $f_{d,h} = g_{d,h}$ for any couple $(d,h)\in D\times H$, as every time a doctor maximizes her payoff, she also does it for her partner hospital. In general, however, the constructed pairwise stable allocation will not be renegotiation proof. We will see in the following sections how, under some assumptions on the family of strategic games, one can, from any pairwise stable allocation, construct a pairwise stable and renegotiation proof one. Let us see the impact on renegotiation proofness in our leading examples.
\medskip

\noindent\textbf{\Cref{ex:prisoners_dilemma_example}.} Recall the prisoners' dilemma matching game example with two players, both with positive IRP $\delta$, and payoff matrix,
\begin{table}[H]
\centering
\begin{tabular}{cccc}
     & \multicolumn{3}{c}{Agent h}\\\noalign{\vskip 2mm}
     \multirow{3}{*}{Agent d  } & & Cooperate & Betray \\
     \cmidrule{2-4}
     & Cooperate & $2\delta, 2\delta$ & $-\delta,3\delta$ \\ 
     \cmidrule{2-4}
     & Betray & $3\delta,-\delta$ & $0,0$
\end{tabular}
\hspace{1.5cm}
\end{table}
\noindent We have found that any allocation in which agents match and play a mixed strategy $(x,y) = ((p,1-p),(q,1-q))$ satisfying,
\begin{align}\label{eq:externally_stable_condition_prisoners_dilemma_example}
    q \in \left[\frac{1+p}{3},3p - 1 \right], \text{ such that } p,q \in [0,1],
\end{align}
is pairwise stable. An allocation is renegotiation proof if any profitable deviation in actions of a player breaks its pairwise stability. Recall that the expected payoffs of the agents are $f_d(x,y) = \delta(3q-p)$ and $g_h(x,y) = \delta(3p-q)$. 

Since there are no other couples, the only way to break the pairwise stability of the allocation is to decrease the payoff of the partner below $\delta$ by deviating. Imposing the agents' payoffs equal to $\delta$, we obtain the strategy profile $$(x^*,y^*) = \bigl((1/2,1/2),(1/2,1/2)\bigr),$$
(the bottom-left vertex of the light blue region in \Cref{fig:prisoners_dilemma}). Let us prove that matching and playing $(x^*,y^*)$ is pairwise stable and renegotiation proof. First of all, since $(x^*,y^*)$ satisfies \Cref{eq:externally_stable_condition_prisoners_dilemma_example} the allocation is pairwise stable. Take $\varepsilon \bi 0$ and suppose that player $d$ deviates to $x' = (1/2 - \varepsilon, 1/2 + \varepsilon)$. Then,
\begin{align*}
    f_d(x',y^*) = \delta\left(3\cdot\frac{1}{2} - \frac{1}{2} + \varepsilon\right) = f_d(x^*,y^*) + \delta\varepsilon \bi f_d(x^*,y^*),
\end{align*}
so $d$ has a profitable deviation. In general, agents profitably deviate if they decrease the probability of cooperating. It follows that,
\begin{align*}
    g_h(x',y^*) = \delta\left(3\cdot\left(\frac{1}{2} - \varepsilon\right) - \frac{1}{2}\right) = g(x^*,y^*) - 3\delta\varepsilon = \delta(1 - 3\varepsilon) \sm \delta.
\end{align*}
Thus, any profitable deviation of player $d$ from $(x^*,y^*)$ decreases the payoff of player $h$ below $\delta$ violating its individual rationality. By symmetry, the same holds for any profitable deviation of player $h$ from $(x^*,y^*)$. Therefore, matching together and playing $(x^*,y^*)$ is pairwise stable and renegotiation proof. It is easy to check that this is, indeed, the only pairwise stable and renegotiation proof allocation.\qed
\medskip

\noindent\textbf{\Cref{ex:one_couple_transfer_game_example}.} Recall the transfer matching game with two agents, both with positive IRPs $\delta$, who play the constant-sum game,
\begin{align*}
&G = (\mathbb{R}_+,\mathbb{R}_+,f,g), \text{ such that for any } x,y \geq 0,\\
&f(x,y) = 10\delta - x + y,\\
&g(x,y) = x - y.
\end{align*}
\noindent Pairwise stability implies that players match and make a transfer profile $(x,y)$ such that $\delta\leq x - y \leq 9 \delta$. If $x - y > \delta$, decreasing slightly $x$ increases $d$'s payoff without violating $h$'s individual rationality. Thus, renegotiation proofness implies that $x - y = \delta$. If $x - y = \delta$ and $y > 0$, decreasing slightly $y$ increases $h$'s payoff without inciting $d$ to leave the couple. Thus, an allocation is pairwise stable and renegotiation proof if and only if $d$ and $h$ agree to match, $x = \delta$ and $y = 0$. We recover the \textbf{competitive price equilibrium} of Shapley and Shubik. \qed
\medskip

\noindent\textbf{\Cref{ex:auction}.} Recall the auction example in which a seller sells an indivisible good to a set of $n$ buyers. Buyers have valuations $v_1 > v_2 \geq ... \geq v_n$ for the good, the seller has valuation $c$ for the good, everybody has null IRP, and 
recall we have assumed $v_1 \geq c$. We have found that pairwise stability implies that the good is sold to buyer $1$ for a price $$p_1 \in [\max(c,v_2),v_1].$$ 
In this continuum, the unique renegotiation proof allocation is $p_1 = \max(c,v_2)$ (we recover the \textbf{2nd price auction outcome}). Indeed, if $p_1 \bi \max(c,v_2)$, decreasing slightly $p_1$ increases buyer $1$'s payoff without breaking the pairwise stability as the seller still prefers to be matched with her. If $p_1 = \max(c,v_2)$ and $c \bi v_2$, decreasing slightly $p_1$ turns seller's payoff negative making her to prefer to become unmatched. Similarly, if $v_2\bi c$, decreasing slightly $p_1$ creates a blocking pair as buyer $2$ and the seller could match together and increase strictly their payoffs by trading the good at some price between $p_1$ and $v_2$. Remark this solution is not Nash stable, as the selected buyer performs a positive payment, i.e., a dominated strategy in her transfer game.\qed

\subsection{Constrained Nash equilibria}\label{sec:constrained_Nash_eq}

In this section the concept of constrained Nash equilibrium (CNE) is introduced and expressed through a quasi-variational inequality formulation. The renegotiation proofness characterization by constrained Nash equilibria is delegated to \Cref{sec:strategy_profiles_modification_algorithm}.

Constrained Nash equilibria are defined for any two-player game and do not depend on the whole matching game $\Gamma$ considered. Because of this, during \Cref{sec:constrained_Nash_eq,sec:feasible_games} we will only consider one strategic game $G = (X,Y,f,g)$, where $X,Y$ are compact strategy sets and $f,g$ are continuous payoff functions. We endow each player with an \textit{outside option}, $f_0,g_0 \in \mathbb{R}$, respectively, representing the payoffs the players can obtain outside of the game $G$.
\medskip


\begin{definition}\label{def:feasible_contract}
\textup{A \textit{strategy profile} $(x,y) \in X \times Y$ is $(f_0,g_0)$-\textbf{feasible} if $f(x,y) \geq f_0$ and $g(x,y) \geq g_0$.}
\end{definition}
\medskip

In words, a strategy profile is feasible for the players, for a given pair of outside options, if they can achieve at least these payoffs when playing that strategy profile.
\medskip

\begin{definition}\label{def:constrained_Nash_eq}
\textup{A $(f_0,g_0)$-feasible strategy profile $(x',y')$ is a $(f_0,g_0)$-\textbf{constrained Nash equilibrium} (CNE) if it satisfies: 
\begin{align}
\begin{split}\label{eq:constrained_Nash_eq}
f(x',y') &= \max\{f(x,y') : g(x,y')\geq g_0, x \in X\},\\
g(x',y') &= \max\{g(x',y) : f(x',y)\geq f_0, y \in Y\}.
\end{split}
\end{align}
We denote the set of $(f_0,g_0)$-constrained Nash equilibria as CNE($f_0,g_0$).}
\end{definition}
\medskip

A strategy profile satisfies Equations (\ref{eq:constrained_Nash_eq}) if any player's profitable deviation decreases the partner's payoff below her outside option. Equations (\ref{eq:constrained_Nash_eq}) can be written as a quasi-variational inequality (QVI) \citep{facchinei2007finite,harker1991generalized,noor1988quasi} with point-to-set mappings that may fail to be lower semi-continuous. Due to this, the existence of CNE is not always guaranteed, as the next example shows.
\medskip

\begin{example}\label{ex:non_feasible_game}
\textup{Let $G$ be the following finite game played in mixed strategies.\footnote{We want to thank Eilon Solan for having suggested this example.} 
\begin{table}[H]
    \centering
    \begin{tabular}{c|ccc}
    & L & M & R \\
    \hline
    \ T \ & 2,1 & -10,-10 & 3,0 \\
    \hline
    M & 3,0 & 2,1 & -10,-10 \\
    \hline
    B & -10,-10 & 3,0 & 2,1
    \end{tabular}
    \label{tab:my_label}
\end{table}
Game $G$ has only one Nash equilibrium, completely mixed, with payoffs $-5/3$ and $-3$ for players $1$ and $2$, respectively. Take null outside options, i.e., $f_0 = g_0 = 0$. The set of $(f_0,g_0)$-feasible strategy profiles is non-empty as, for example, $(T,L)$, $(M,L)$, $(M,M)$, $(B,M)$, $(T,R)$, and $(B,R)$ belong to it. However, there is no $(f_0,g_0)$-constrained Nash equilibrium. Let us prove this formally. Let $Z_{1,2}$ be the set of feasible contracts for players $1$ and $2$ and $$(x,y) = ((x_1,x_2,x_3),(y_1,y_2,y_3)),$$ be a mixed strategy profile that achieves some payoff $(f',g')$, in which each coordinate corresponds to playing Top, Medium, and Bottom, respectively for player $1$, and Left, Medium, and Right, respectively for player $2$. }

\textup{If any of the players plays a pure strategy, the other player can improve her payoff by a unilateral deviation, still respecting the outside option of the first player. For example, consider that player $1$ plays $x = (1,0,0)$, i.e., she plays Top. Then, player $2$ can deviate to play Left with probability $1$, increasing her payoff and still giving to player $1$ a positive payoff, that is, respecting her outside option. In the same way, if player $2$ plays Left, player $1$ can deviate to play Medium. Thus, no pure strategy can be a constrained equilibrium.}

\textup{The same holds if any of the players play a mixed strategy without full support. Consider that player $1$ plays $(x_1,x_2,0)$. Then, player $2$ can deviate and play Left with probability $1$ if $x_1$ is large enough, or a mixed strategy mixing only Left and Medium if $x_2$ is large enough. In any of the two cases, players converge to play pure strategies, which we already saw cannot be a constrained Nash equilibrium.}

\textup{Consider that both players play mixed strategies with full support. The Nash equilibrium of the game not belonging to $Z_{1,2}$, it cannot be the case that players play $(1/3,1/3,1/3)$. Without loss of generality, assume that $x_1 \bi 1/3 \geq x_2$. The expected payoff of player $2$ is given by,
\begin{align*}
    g' &= g(x,y) = y_1(12x_1 + 21x_2 - 11) + y_2(-9x_1 + 12x_2 - 1) + (1-x_1 - 11x_2).
\end{align*}
It holds that $x_1,x_2,y_1,y_2$ are strictly positive and $x_1 + x_2 \sm 1$, $y_1 + y_2 \sm 1$, since players have full support. Then, $-9x_1 + 12x_2 - 1 \sm 0$, so player $2$ can deviate and increase her own payoff by decreasing $y_2$. The expected payoff of player $1$ is,
\begin{align*}
    f' &= f(x,y) = y_1(11x_1 + 25x_2 - 12) + y_2(-14x_1 + 11x_2 + 1) + (x_1 - 12x_2 + 2).
\end{align*}
It holds $-14x_1 + 11x_2 + 1 \sm 0$, so player $1$ increases her payoff if $y_2$ decreases as well. Therefore, there exists a profitable deviation for player $2$ that still guarantees to player $1$ her outside option. Intuitively, since player $1$ is more likely to play Top, it makes sense that both players improve their payoff if player $2$ decreases the probability of playing Medium, so they avoid getting $-10$. We conclude that $(f',g')$ is not the payoff of a constrained equilibrium payoff, and therefore, the game $G$ is not feasible.
}\qed
\end{example}

\subsection{Feasible games}\label{sec:feasible_games}

Constrained Nash equilibria are the key to obtaining renegotiation proof allocations. Since CNE are not guaranteed to exist as the previous section explained, we will consider the following class of games.
\medskip

\begin{definition}\label{def:feasible_game}
\textup{A two-person game $G$ is \textbf{feasible} if for any pair of outside options $(f_0,g_0) \in \mathbb{R}^2$, which admits at least one $(f_0,g_0)$-feasible strategy profile, there exists a $(f_0,g_0)$-CNE.}
\end{definition}
\medskip

\textit{Feasibility} is a necessary condition for the existence of pairwise stable and renegotiation proof allocations. Let us illustrate it with an example.
\medskip

\noindent\textbf{\Cref{ex:one_couple_transfer_game_example}.} Recall the transfer matching game example in which two agents $d$ and $h$ with positive IRPs $\delta$ can match and play,
\begin{align*}
&G = (\mathbb{R}_+,\mathbb{R}_+,f,g), \text{ such that for any } x,y \geq 0,\\
&f_{d,h}(x,y) = 10\delta - x + y,\\
&g_{d,h}(x,y) = x - y.
\end{align*}
Taking $(f_0,g_0) = (\delta,\delta)$, any transfer profile $(x,y)$ satisfying $x - y \in (\delta,9\delta)$ is $(f_0,g_0)$-feasible. The $(f_0,g_0)$-CNE of this game corresponds to the renegotiation proof allocation found before: $x = \delta$ and $y = 0$. \qed
\medskip


As not all two-player games are feasible, we dedicate the rest of this section to prove the following theorem, showing the richness of this class of games.
\medskip

\begin{theorem}\label{teo:Class_of_feasible_games}
The class of feasible games includes,
\begin{itemize}
\item[1.] Constant-sum games with a value,
\item[2.] Strictly competitive games with an equilibrium,
\item[3.] Potential games,
\item[4.] Infinitely repeated games.
\end{itemize}
\end{theorem}

The proof that a game is feasible relies on the characteristic of the game. In other words, the proof is game-dependent. Therefore, we prove \Cref{teo:Class_of_feasible_games} in several subsections. In addition, we briefly recall each of the games mentioned in the theorem. 

Infinitely repeated games, besides being feasible, will satisfy many interesting properties. We recall these games and study their feasibility in \Cref{sec:infinitely_repeated_games_are_feasible}. The mentioned properties are delegated to \Cref{sec:the_strongest_stability}.

\subsubsection{Zero-sum games with a value are feasible}\label{sec:zero_sum_games_are_feasible} 

A two-person game $G = (X,Y,f,g)$ is a zero-sum game if players' payoff functions satisfy $f(\cdot, \cdot) = - g(\cdot, \cdot) =: u(\cdot, \cdot)$. 

\begin{definition}
\textup{Given $(f_0,g_0)$ player's outside options, with $f_0 \leq g_0$, a strategy profile $(x,y) \in X \times Y$ is \textbf{feasible} if and only if it satisfies $$f_0 \leq u(x,y) \leq g_0.$$
A feasible payoff profile $(x',y')$ is a $(f_0,g_0)$-\textbf{CNE} if for any $(x,y) \in X \times Y$, it holds that, if $u(x,y') \bi u(x',y')$ then, $u(x,y') \bi g_0$ and, if $u(x',y) \sm u(x',y')$ then, $u(x',y) \sm f_0$.}
\end{definition}

\begin{proof}{\textbf{Zero-sum games with a value are feasible}.} Let $G = (X,Y,u)$ be a zero-sum game, with $X,Y$ compact convex subsets of topological vector spaces and $u$ separately continuous. Suppose the game $G$ has a value $w$ and by continuity of $u$ and compactness of $X$ and $Y$, players have optimal strategies $(x^*,y^*)$. Let $(x',y')$ be a feasible contract ($f_0 \leq u(x',y') \leq g_0$). The analysis is split into three cases.
\medskip

\noindent\textbf{Case 1.} $f_0 \leq w \leq g_0$. The optimal contract $(x^*,y^*)$ is feasible. Since $(x^*,y^*)$ is a Nash equilibrium, it is a $(f_0,g_0)$-constrained Nash equilibrium.
\medskip

\noindent\textbf{Case 2.} $w  \sm f_0  \sm g_0$. Consider the set $A(f_0) := \{x \in X: \exists y \in Y, u(x,y) \geq f_0\}$. Since $(x',y')$ is a feasible contract, $A(f_0)$ is non-empty. Consider the optimization problem
\begin{align*}\label{eq:problem_P_zero_sum_game}\tag{P}
  \sup\left[ \inf \{ u(x,y) : u(x,y) \geq f_0, y \in Y\} : x \in A(f_0)\right].
\end{align*}
For a given $x_0 \in A(f_0)$, the set $\{y \in Y: g(x_0,y) \geq f_0\}$ is bounded and so, there exists an infimum $y_0(x_0)$. Thus, as the set $A(f_0)$ is also bounded, there exists a supremum $x_0$. Let $(x_0,y_0(x_0))$ be the pair supremum-infimum solution of (\ref{eq:problem_P_zero_sum_game}). It holds that $u(x_0,y_0(x_0)) \geq f_0$ by construction. Suppose that $u(x_0,y_0(x_0)) \bi f_0$. Since $w  \sm f_0$, it holds $w  \sm f_0  \sm u(x_0,y_0(x_0))$. Considering the optimal contract $(x^*,y^*)$, it holds $u(x_0,y^*) \leq u(x^*,y^*) = w  \sm f_0  \sm u(x_0,y_0(x_0))$. By continuity of the function $u(x_0,\cdot)$, there exists $\lambda \in (0,1)$ such that $u(x_0, y_{\lambda} ) = f_0$, with $y_{\lambda} = \lambda y^* + (1-\lambda)y_0(x_0) \in Y$. This contradicts the fact that $(x_0, y(x_0))$ is the solution to (\ref{eq:problem_P_zero_sum_game}). Thus, $u(x_0,y(x_0)) = f_0$. If this strategy profile is a constrained Nash equilibrium, the study of the second case is done. If not, consider $y_{t}\in Y$ as the convex combination between $y(x_0)$ and $y^*$ with $t$ computed by,
\begin{align}\label{eq:computation_of_t}
  t := \sup\{\tau \in [0,1] : y_{\tau} := (1-{\tau})y(x_0) + {\tau} y^* \text{ and } \exists x_{\tau} \in X, u(x_{\tau},y_{\tau}) = f_0 \}.  
\end{align}
$t$ exists as for ${\tau} = 0$, there exists $x_0$ such that $u(x_0,y(x_0)) = f_0$. In addition, $y_{t} \neq y^*$, since the contract $(x^*,y^*)$ is a saddle point, $u(x^*, y^*) = w  \sm f_0$ and any deviation of player $1$ decreases the payoff. Notice that any profitable deviation of player $2$ decreases the payoff below $f_0$, as $u(x_{t},y_{t}) = f_0$. Suppose there exists $\hat{x} \in X$ such that $f_0 = u(x_{t}, y_{t}) \sm u(\hat{x},y_{t}) \leq g_0$. As a summary, it holds: $u(x^*,y^*) = w  \sm f_0 =  u(x_{t}, y_{t}) \sm u(\hat{x}, y_{t})$ with $y_{t}$ in the interval $(y(x_0), y^*)$. Once again, by the continuity of $u$ and the convexity of $X \times Y$, there exists an element $z$ in the interval $(y_{t},y^*)$ and some $x_z \in X$, such that $u(x_z,z) = f_0$, contradicting the definition of $t$ in (\ref{eq:computation_of_t}). Thus, $(x_{t},y_{t})$ is a constrained Nash equilibrium.
\medskip

\noindent\textbf{Case 3.} $f_0  \sm g_0  \sm w$. Analogous to case 2.
\end{proof} 

Zero-sum games with a value are not only feasible but for any $(f_0,g_0)$-CNE $(x,y)$, it is satisfied,
$$u(x,y) = \text{median}(f_0,w,g_0).$$
This equation will be particularly useful later. Let us illustrate the previous proof with the following example. 
\medskip

\begin{example}\label{ex:matching_pennies}
\textup{Consider the \textit{matching pennies} game (\Cref{tab:matching_pennies}) played in mixed strategies. Matching pennies is a zero-sum game of value $w = 0$ and optimal mixed strategies $(x^*, y^*) = (1/2, 1/2)$. \Cref{fig:matching_pennies} represents the payoff function $u$ (in red), the value $w$ (black dot), two pair of outside options $(f_0,g_0)$ and $(f'_0,g'_0)$ for the players, and two strategy profile payoffs $u(x,y)$ (blue dot) and $u(x',y')$ (green dot). It holds $$w \sm f_0 \sm g_0 \text{ and } f'_0 \sm g'_0 \sm w,$$
being, respectively, the second and third cases studied during the previous proof.} \noindent\textup{We can observe that $(x,y) \in$ CNE$(f_0,g_0)$ and $(x',y') \in$ CNE$(f'_0,g'_0)$. Indeed, if $w \sm f_0 \sm g_0$ holds, any profitable deviation of Player $2$ from $(x,y)$ makes the payoff of the game lower than $f_0$, so Player $1$ receives less than her outside option. Analogously, in the case $f'_0 \sm g'_0 \sm w$, if Player $1$ can deviate from $(x',y')$ increasing her payoff, Player $2$ receives less than $g'_0$.} \qed
\end{example}
\medskip

\begin{remark}
\textup{From the nature of zero-sum games, we obtain that the constrained Nash equilibrium selection provides always a \textbf{Pareto-optimal} action profile.}
\end{remark}

\textup{\begin{minipage}[h]{0.4\linewidth}
\begin{table}[H]
\begin{tabular}{cccc}
& \multicolumn{3}{c}{Player 2}\\\noalign{\vskip 2mm}
\multirow{3}{*}{Player 1 } & & A & B \\
\cmidrule{2-4}
& A & 1 & -1 \\
\cmidrule{2-4}
& B & -1 & 1
\end{tabular}
\caption{Matching pennies}
\label{tab:matching_pennies}
\end{table}
\end{minipage}\hfill
\begin{minipage}[H]{0.58\linewidth}
\begin{figure}[H]
\centering
\includegraphics[scale = 0.7]{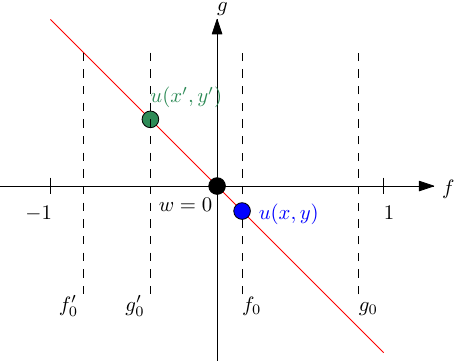}
\caption{Feasible payoffs region}
\label{fig:matching_pennies}
\end{figure}
\end{minipage}}
\medskip

The feasibility of strictly competitive games is a consequence of the one for zero-sum games. For a formal proof, please check Appendix \ref{sec:feasibility_strictly_competitive_games}.

\subsubsection{Potential games are feasible} 

A two-person game $G = (X, Y, f,g)$ is a potential game if there exists a(n exact) potential function $\phi : X \times Y \to \mathbb{R}$ such that,  $\forall x,x' \in X, y,y' \in Y$ it satisfies,
\begin{align*}
\phi(x',y') - \phi(x,y') &= f(x',y') - f(x',y), \\
\phi(x',y') - \phi(x',y) &= g(x',y') - g(x',y).
\end{align*}

\begin{proof}{\textbf{Potential games are feasible}.} Let $G= (X, Y, f,g)$ be a potential game with potential function $\phi$. Let $(f_0,g_0)$ be outside options and $Z_0$, be the set of all $(f_0,g_0)$-feasible contracts. Suppose $Z_0 \neq \emptyset$. We aim to prove that $Z_0$ includes at least one $(f_0,g_0)$-CNE. Consider 
$$(x',y') \in \argmax \{\phi(x,y) : (x,y) \in Z_0 \}.$$ 
Remark $(x',y')$ always exists as $Z_0$ is a non-empty compact set and $\phi$ is continuous. It holds that $(x',y')$ is $(f_0,g_0)$-feasible. Consider $x \in X$ such that $f(x,y') \bi f(x',y')$ and $g(x,y') \geq g_0$. In particular, 
$$f(x,y') \bi f(x',y') \geq f_0,$$
so $(x,y') \in Z_0$. Moreover, $\phi(x,y') \bi \phi(x',y')$ as $f(x,y') \bi f(x',y')$. This contradicts that $(x',y')$ belongs to the argmax of $\phi$ in $Z_0$. The same holds for player $2$. Thus, $(x',y')$ is a $(f_0,g_0)$-CNE.
\end{proof}

Unlike the previous feasible games studied, Pareto-optimality and the constrained Nash equilibria requirements are not always compatible for potential games. Consider, for example, the following prisoners' dilemma,
\begin{table}[H]
\centering
\begin{tabular}{cccc}
     & \multicolumn{3}{c}{Player 2}\\\noalign{\vskip 2mm}
     \multirow{3}{*}{Player 1  } & & Cooperate & Betray \\
     \cmidrule{2-4}
     & Cooperate & $2, 2$ & $-1,3$ \\ 
     \cmidrule{2-4}
     & Betray & $3,-1$ & $0,0$
\end{tabular}
\hspace{1.5cm}
\end{table}
Given outside options equal to $0$ for both agents, the unique CNE is the Nash equilibrium where both players Betray, which is Pareto-dominated by both player cooperating. 


 
\subsubsection{Infinitely repeated games are feasible}\label{sec:infinitely_repeated_games_are_feasible} 

Consider a two-person finite game in mixed strategies, $G = (X,Y,f,g)$, called the \textbf{stage game}, that is played in discrete time $k = \{1,...,K,...\}$ after observing the past history of plays $h_k = ((x_1,y_1),...,(x_{k-1},y_{k-1}))$. Given $K \in \mathbb{N}$, consider the $K$-stages game $G^K$ defined by the payoff functions 
\begin{align*}
f(K,\sigma_1, \sigma_2) := \frac{1}{K} \mathbb{E}_{\sigma}\left[\sum_{k = 1}^K f(x_k,y_k)\right], \quad 
g(K,\sigma_1, \sigma_2) := \frac{1}{K} \mathbb{E}_{\sigma}\left[\sum_{k = 1}^K g(x_k,y_k)\right],
\end{align*}
where $\sigma_1 : \bigcup(X\times Y)_{k=1}^{\infty} \to X$ and $\sigma_2 : \bigcup(X\times Y)_{k=1}^{\infty} \to Y$ are the players' \textit{behavioral strategies}. We define the \textbf{uniform game} $G_{\infty}$ as the game obtained by taking $K \to \infty$ in $G^K$. To state the definition of constrained Nash equilibrium for uniform games, we need some preliminary concepts.
\medskip

\begin{definition}\label{def:feasible payoffs}
\textup{We define the set of \textbf{feasible payoffs} 
$$co(f,g) := co\{(f(x,y),g(x,y)) \in \mathbb{R}^2 : (x,y) \in X \times Y\},$$
where $co$ stands for the convex envelope.}
\end{definition}
\medskip

\begin{definition}\label{def:punishment_levels}
\textup{Given the \textbf{punishment level} of players $1$ and $2$,
\begin{align}
\begin{split}\label{eq:punishment_levels}
\alpha &:= \min_{y \in Y} \max_{x \in X} f(x,y) \text{ and } \beta := \min_{x \in X} \max_{y \in Y} g(x,y),
\end{split}
\end{align}
we define the set of \textbf{uniform equilibrium payoffs} (Folk theorem of \cite{aumann1994long}) as $$E = \{(\bar{f}, \bar{g}) \in co(f,g) : \bar{f} \geq \alpha, \bar{g} \geq \beta\}.$$}
\end{definition}
\medskip

\begin{definition}\label{def:acceptable_payoffs}
\textup{Consider $f_0,g_0 \in \mathbb{R}$ outside options for player $1$ and player $2$, respectively. The set of \textbf{acceptable payoffs} is defined as 
\begin{align*}
E(f_0,g_0) := \{(\bar{f}, \bar{g}) \in co(f,g) : \bar{f} \geq f_0 \text{ and } \bar{g} \geq g_0\}.
\end{align*}}
\end{definition}

We are ready to define the constrained Nash equilibria of a uniform game.
\medskip

\begin{definition}\label{def:constrained_uniform_equilibrium}
\textup{A strategy profile $\sigma = (\sigma_1, \sigma_2)$ is called a \textbf{constrained uniform equilibrium} of $G_{\infty}$ if:
\begin{itemize}
    \item[1.] $\forall \varepsilon  \bi 0$, $\sigma$ is a $(f_0,g_0)$-$\varepsilon$-constrained equilibrium of any long enough finitely repeated game, that is, $\exists K_0, \forall K \geq K_0, \forall (\tau_1,\tau_2):$
    \begin{itemize}
        \item[(a)] If $f(K,\tau_1, \sigma_1) \bi f(K,\sigma) + \varepsilon$ then $g(K,\tau_1, \sigma_2) < g_0$,
        \item[(b)] If $g(K,\sigma_1, \tau_2) \bi g(K,\sigma) + \varepsilon$ then $f(K,\sigma_1, \tau_2) < f_0$, and,
    \end{itemize}
    \item[2.] $[(f(K,\sigma),g(K,\sigma))]_K$ has a limit $[f(\sigma),g(\sigma)]$ in $\mathbb{R}^2$ as $K$ goes to infinity, with $f(\sigma) \geq f_0$, $g(\sigma) \geq g_0$.
\end{itemize}
The set of constrained uniform equilibrium payoffs is denoted as $E^{\infty}(f_0,g_0)$.}
\end{definition}
\medskip

A uniform game will be feasible if every time that the set of acceptable payoffs is non-empty, the set of constrained uniform equilibrium payoffs is non-empty as well.
\medskip

\begin{definition}\label{def:uniform_game_feasible}
$G_{\infty}$ is \textbf{feasible} if whenever $E(f_0,g_0)$ is non-empty, $E^{\infty}(f_0,g_0)$ is non-empty as well.
\end{definition}
\medskip

By the Folk theorem of \cite{aumann1994long}, the following proposition holds.
\medskip
\begin{proposition}\label{prop:uniform_equilibria_payoff_are_constrained_uniform_payoff}
Any payoff in $E \cap E(f_0,g_0)$ can be achieved by a uniform equilibrium and so, by a constrained uniform equilibrium.
\end{proposition}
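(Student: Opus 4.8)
The plan is to reduce the statement to the classical Folk theorem by exploiting the fact that the constrained-equilibrium conditions (1)(a)--(1)(b) are strictly weaker than genuine equilibrium conditions, hence are satisfied \emph{vacuously} by any actual uniform (Nash) equilibrium. Fix a target payoff $(\bar u, \bar v) \in E \cap E_{u_0,v_0}$. Unpacking the definitions, this means $(\bar u, \bar v) \in co(u,v)$ with $\bar u \geq \max\{\alpha, u_0\}$ and $\bar v \geq \max\{\beta, v_0\}$; that is, the point is feasible and lies weakly above both players' punishment levels \emph{and} both outside options. The two conjuncts play different roles: feasibility together with $\bar u \geq \alpha$, $\bar v \geq \beta$ is exactly what the Folk theorem needs, while $\bar u \geq u_0$, $\bar v \geq v_0$ is exactly what condition (2) of a constrained uniform equilibrium demands.

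First I would construct a uniform (Nash) equilibrium $\sigma$ of $G_{\infty}$ whose payoff is $(\bar u, \bar v)$. Since $(\bar u,\bar v) \in co(u,v)$, write it as a convex combination of stage payoff vectors and realize it as the long-run time average of a deterministic block schedule over the corresponding stage profiles, the players coordinating on the clock so that empirical frequencies converge to the target weights. Equip this path with grim-trigger punishments: the moment a player deviates, the opponent switches forever to the mixed strategy holding the deviator to his/her minmax level ($\alpha$, resp. $\beta$), which exists by the minmax theorem applied to the finite stage game in mixed strategies. Because $\bar u \geq \alpha$ and $\bar v \geq \beta$, any single-stage gain from deviating is bounded while the ensuing per-stage loss is nonnegative and persists, so time-averaging wipes it out: for every $\varepsilon>0$ there is $T_0$ with $\gamma_T^1(\tau_1,\sigma_2) \leq \gamma_T^1(\sigma)+\varepsilon$ for all $T\geq T_0$ and all $\tau_1$, and symmetrically for player $2$. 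This is precisely the Folk theorem \cite{friedman1971non}, and it already delivers property (2), since the path payoff converges to $(\bar u,\bar v)$ with $\bar u \geq u_0$ and $\bar v \geq v_0$.

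It then remains to check (1)(a)--(1)(b), which come for free. The uniform-equilibrium inequality $\gamma_T^1(\tau_1,\sigma_2)\leq \gamma_T^1(\sigma)+\varepsilon$, valid for all $T\geq T_0$ and all $\tau_1$, says the hypothesis ``$\gamma_T^1(\tau_1,\sigma_2) > \gamma_T^1(\sigma)+\varepsilon$'' of (a) is never met, so the implication holds vacuously; the twin inequality for player $2$ gives (b). Hence $\sigma$ is a $(u_0,v_0)$-constrained uniform equilibrium achieving $(\bar u,\bar v)$, which proves the proposition.

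The only delicate point, and where I expect the real obstacle to sit, is the boundary of $E$, where $\bar u = \alpha$ or $\bar v = \beta$: there the deviation gain is only annihilated in the limit, so I must lean on the $\varepsilon$-slack built into the uniform notion rather than on strict individual rationality, which is the usual Folk-theorem hypothesis. I would also be careful that the block approximation of $(\bar u,\bar v)\in co(u,v)$ yields time averages that genuinely \emph{converge} to the target (not merely visit a neighborhood infinitely often), which is why a schedule with controlled frequencies is used; and that detection of deviations from a mixed target path is handled in the standard way, so that the grim trigger is well defined.
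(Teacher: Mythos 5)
Your proof is correct and follows essentially the same route as the paper, which simply invokes the Folk theorem: any payoff in $E \cap E_{u_0,v_0}$ is realized as a uniform equilibrium payoff, conditions (1)(a)--(1)(b) then hold vacuously, and condition (2) follows from membership in $E_{u_0,v_0}$. Your write-up just spells out the details the paper leaves implicit.
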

\medskip

We are ready to prove the feasibility of infinitely repeated games (\Cref{teo:Class_of_feasible_games}).

\begin{proof}{\textbf{Infinitely repeated games are feasible}.} Suppose $E(f_0,g_0)$ is non-empty. We aim to show that $E^{\infty}(f_0,g_0)$ is non-empty as well. Recall the punishment levels $\alpha$ and $\beta$ of the players (\Cref{def:punishment_levels}).  The analysis is split into four cases.
\medskip

\noindent\textbf{Case 1.} $g_0 \geq \beta$ and $f_0 \geq \alpha$. It holds that $E(f_0,g_0) \subseteq E$. By \Cref{prop:uniform_equilibria_payoff_are_constrained_uniform_payoff}, $E^{\infty}(f_0,g_0) = E(f_0,g_0)$. As $E(f_0,g_0)$ is non-empty, $E^{\infty}(f_0,g_0)$ is non-empty as well. 
\medskip

\noindent\textbf{Case 2.} $g_0  \sm \beta$ and $f_0  \sm \alpha$. It holds that $E \subset E(f_0,g_0)$. Thus, $E^{\infty}(f_0,g_0)$ contains $E$ (by \Cref{prop:uniform_equilibria_payoff_are_constrained_uniform_payoff}) and so, it is non-empty.
\medskip

\noindent\textbf{Case 3.} $g_0  \sm \beta$ and $f_0 \geq \alpha$. If $F:=E(f_0,g_0) \cap E$ is non-empty, by \Cref{prop:uniform_equilibria_payoff_are_constrained_uniform_payoff}, all elements on $F$ belong to $E^{\infty}(f_0,g_0)$. Otherwise, consider $(f',g')$ defined by
$$g' :=  \max\{\bar{g} : \exists \bar{f} \text{ s.t. } (\bar{f},\bar{g}) \in E(f_0,g_0)\},\ f' \in \{\bar{f} : (\bar{f},g') \in E(f_0,g_0)\}.$$
As $E(f_0,g_0)$ is a non-empty closed set, $(f',g')$ indeed exists and it belongs to $E(f_0,g_0)$. Consider the strategy profile $\sigma'$ in which the players follow a pure plan which yields the payoff $(f',g')$. If Player $1$ deviates, Player $2$ punishes her at the level $\alpha$, and if Player $2$ deviates, Player $1$ ignores the deviation and continues to follow the pure plan. 

Player $1$ cannot gain more than $\varepsilon$ by deviating. Indeed, if she does, Player $2$ punishes her by reducing his payoff to $\alpha$. Since $(f',g') \in E(f_0,g_0)$, it holds that $f' \geq f_0 \geq \alpha$ and so, this deviation is not profitable. For Player $2$, suppose there exists $K \in \mathbb{N}$ and $\varepsilon \bi 0$ such that she can obtain a payoff $g'' \bi g' + \varepsilon$ by deviating at stage $K$. Let $f''$ be the average payoff of Player $1$ at stage $K$ after the deviation of Player $2$. Since $(f'',g'')$ is an average payoff of the $K$-stages game, it is feasible. It cannot hold that $f'' \geq f_0$, since it would contradict the definition of $g'$, as the payoff $(f'',g'')$ would be acceptable. Thus, $f'' \sm f_0$. We conclude that $\sigma'$ is a constrained equilibrium and then, $(f',g') \in E^{\infty}(f_0,g_0)$. 
\medskip

\noindent\textbf{Case 4.} $g_0 \geq \beta$ and $f_0  \sm \alpha$. Analogous to case 3. 
\end{proof} 

\begin{remark}\label{remark:oracle_for_repeated_games_is_always_Pareto}
\textup{The constrained Nash equilibrium choice can always be done \textbf{Pareto-optimally}. Moreover, except for some non standard cases, the existence of feasible uniform equilibria holds. We study this in detail in \Cref{sec:the_strongest_stability}}.
\end{remark}

\subsection{Renegotiation process}\label{sec:strategy_profiles_modification_algorithm}

Constrained Nash equilibria capture renegotiation proofness when considering the appropriated outside options (the players' reservation payoffs\footnote{Reservation payoffs where called \textit{threat levels} by \cite{rochford1984symmetrically}.}). Let 
$$\Gamma = (D,H,(G_{d,h}: d \in D, h \in H, \underline{f},\underline{g}),$$ be a matching game. 
\medskip

\begin{definition}
\textup{Given $\pi = (\mu,\vec{x},\vec{y})$ an allocation and $(d,h) \in \mu$ an arbitrary matched couple, we define their \textbf{reservation payoffs} by,
\begin{align}
    \begin{split}\label{eq:outside_options}
     f_d^{\pi} &:= \max \{f_{d,h'}(s,t) : h' \in H_0 \setminus \{h\}, g_{d,h'}(s,t) \bi g_{h'}(\pi), (s,t) \in X_d \times Y_{h'}\},\\
    g_h^{\pi} &:= \max \{g_{d',h}(s,t) : d' \in D_0 \setminus \{d\}, f_{d',h}(s,t) \bi f_{d'}(\pi), (s,t) \in X_{d'} \times Y_h\},
    \end{split}
\end{align}
Reservation payoffs are the best payoffs that $d$ and $h$ can get outside of their couple by matching with a partner who may accept them or becoming single.}
\end{definition}
\medskip

We are finally ready to characterize the renegotiation proof allocations by constrained Nash equilibria.
\medskip

\begin{proposition}\label{teo:int_stability_is_equivalent_to_CNEs}
A pairwise stable allocation $\pi = (\mu,\vec{x},\vec{y})$ is renegotiation proof if and only if for any $(d,h) \in \mu$, $(x_d,y_h)$ is a $(f_d^{\pi}, g_h^{\pi})$-constrained Nash equilibria, where $f_d^{\pi}$ and $g_h^{\pi}$ are the agents' reservation payoffs.
\end{proposition}

\begin{proof}
Suppose that all couples play constrained Nash equilibria. Let $(d,h) \in \mu$ be an arbitrary matched couple and $(x_d,y_h)$ be their $(f_d^{\pi}, g_h^{\pi})$-CNE. Suppose there exists $s \in X_d$ such that $f_{d,h}(s, y_h) \bi f_{d,h}(x_d,y_h)$. In particular, 
$$f_{d,h}(s, y_h) \bi \max \{f_{d,h}(\ell,y_h) : g_{d,h}(\ell,y_h) \geq g_h^{\pi}, \ell \in X_d \}.$$
Thus, $g_{d,h}(s,y_h) \sm g_h^{\pi}$. Let $d'$ be the player that attains the maximum in $g_h^{\pi}$. Then, $(d',h)$ is a blocking pair of the pairwise stability of $\pi$. For $h$ the proof is analogous.

Conversely, suppose $\pi$ is renegotiation proof. Let $(d,h) \in \mu$ be an arbitrary couple and $(x_d,y_h)$ be their strategy profile. Then, for any $s \in X_d$ such that $f_{d,h}(s, y_h) \bi f_{d,h}(x_d,y_h)$, it holds that $g_{d,h}(s, y_h) \sm g_h^{\pi}$. Thus, 
$$f_{d,h}(x_d,y_h) \geq \max\{f_{d,h}(\ell,y_h) : g_{d,h}(\ell,y_h) \geq g_h^{\pi}, \ell \in X_d\}.$$
For player $h$ the proof is analogous.
\end{proof}

\Cref{teo:int_stability_is_equivalent_to_CNEs} gives the first insight into the design of an algorithm to compute renegotiation proof allocations: It is enough to modify the strategy profiles of each couple by constrained Nash equilibria under outside options equal to their reservation payoffs. A second insight comes from the following result.
\medskip

\begin{proposition}\label{teo:the_outside_options_of_a_blocking_pair_are_higher_than_their_payoff}
Let $\pi = (\mu,\vec{x},\vec{y})$ be an allocation. Then, $\pi$ is pairwise stable if and only if $f_d^{\pi}\leq f_d(\pi)$ and $g_h^{\pi} \leq g_h(\pi)$, for any $(d,h) \in D \times H$, where $f_d^{\pi}$ and $g_h^{\pi}$ are agents' reservation payoffs (Equations (\ref{eq:outside_options})).
\end{proposition}

\begin{proof}
Suppose that $\pi$ is pairwise stable and let $d \in D$ be a doctor such that $f_d^{\pi} \bi f_d(\pi)$. Thus, there exists $h \in H_0 \setminus \{\mu(d)\}$ and $(s,t) \in X_d \times Y_h$ such that 
$$g_{d,h}(s,t) \bi g_h(\pi) \text{ and } f_d^{\pi} = f_{d,h}(s,t).$$
It is clear that $(d,h)$ is a blocking pair of $\pi$, so we obtain a contradiction. The same conclusion holds if for any $h \in H$, $g_h^{\pi} \bi g_h(\pi)$.

Conversely, suppose that for any $(d,h) \in D \times H$, $f_d^{\pi} \leq f_d(\pi)$ and $g_h^{\pi} \leq g_h(\pi)$. Let $(d,h) \in D \times H$ be a blocking pair of $\pi$. Then, there exists $(s,t) \in X_d \times Y_h$ such that 
$$f_{d,h}(s,t) \bi f_d(\pi) \text{ and } g_{d,h}(s,t) \bi g_h(\pi).$$
In particular, notice that $f_d^{\pi} \geq f_{d,h}(s,t)$ and $g_h^{\pi} \geq g_{d,h}(s,t)$, as each of player can offer to the other one more than their current payoffs. We obtain a contradiction.
\end{proof}

If $\pi$ is a pairwise stable allocation, \Cref{teo:the_outside_options_of_a_blocking_pair_are_higher_than_their_payoff} implies that the reservation payoffs of all couples are never greater than their current payoffs. Therefore, if $(d,h) \in \mu$, $(x_d,y_h)$ is always $(f_d^{\pi}, g_h^{\pi})$-feasible and, if their game is feasible, there always exists a $(f_d^{\pi}, g_h^{\pi})$-constrained Nash equilibrium $(\hat{x}_d, \hat{y}_h)$. Moreover, replacing the current strategy profile with the CNE will not create blocking pairs. Thus, the pairwise stability is always preserved\footnote{\cite{rochford1984symmetrically} established a similar result within the utility space. In particular, our renegotiation process and his re-bargaining process coincide when its model is mapped to a matching game model where the players play bargaining games over the transfers with equal bargaining power.}. With this in mind, a renegotiation process is designed. It will output a pairwise stable and renegotiation proof allocation for any pairwise stable allocation used as an input. Intuitively, it will replace one by one the strategy profiles of the couples by a CNE, using at each iteration the reservation payoffs (\Cref{eq:outside_options}) as outside options.

If at any iteration a couple replaces $(x_d,y_h)$ by a $(f_d^{\pi}$, $g_h^{\pi})$-feasible Nash equilibrium, they will keep playing it during all posterior iterations. If couples cannot replace their strategy profile with a Nash equilibrium, the choice of a constrained Nash equilibrium is made by an \textbf{oracle}. Our \textit{renegotiation process} is summarized in Algorithm \ref{Algo:strategy_profiles_modification}. 

\begin{algorithm}[ht]
\SetKwInOut{Input}{input}\SetKwInOut{Output}{output}
\Input{$\pi = (\mu,\vec{x},\vec{y})$ pairwise stable allocation}
\SetInd{0.2cm}{0.2cm}

$t \longleftarrow 1, \pi(t) \longleftarrow \pi$
 
\While{True}{
\For{$(d,h) \in \mu$}{
Compute the reservation payoffs $f_d^{\pi(t)}$ and $g_h^{\pi(t)}$ (\Cref{eq:outside_options}).

Choose $(x_d^*,y_{h}^*) \in$ CNE$(f_d^{\pi(t)}, g_h^{\pi(t)})$ and set $(x_d^{t+1},y_{h}^{t+1}) \longleftarrow (x_d^*,y_{h}^*)$}
\If{$\forall (d,h) \in \mu, (x_d^{t+1},y_{h}^{t+1}) = (x_d^t,y_{h,d}^t)$}{Output $\pi(t)$}
$t \longleftarrow t+1$
}
\caption{Renegotiation process}
\label{Algo:strategy_profiles_modification}
\end{algorithm}

The convergence of Algorithm \ref{Algo:strategy_profiles_modification} does not directly hold as the reservation payoffs change at each iteration. Indeed, replacing a strategy profile with a constrained Nash equilibrium may decrease the payoff of an agent. Therefore, her reservation payoff may also change and the constrained Nash equilibrium may not be an equilibrium anymore. Nevertheless, if after replacing all strategy profiles of $\pi$, the reservation payoffs remain invariant, the current allocation is indeed renegotiation proof and the algorithm stops. 

The proof of the convergence of the renegotiation process (Algorithm \ref{Algo:strategy_profiles_modification}) is game-dependent as the choice of the oracle is different for each class of games. First, we state the proof of its correctness. Then, we state the proof that the renegotiation process converges for each of the classes of games.
\medskip

\begin{theorem}[\textbf{The renegotiation process is correct}]\label{teo:strategy_profile_modification_is_correct}
If the renegotiation process (Algorithm \ref{Algo:strategy_profiles_modification}) converges, its output is pairwise stable and renegotiation proof. 
\end{theorem}

\begin{proof}
Let $\pi = (\mu,\vec{x},\vec{y})$ be the input of Algorithm \ref{Algo:strategy_profiles_modification}. By construction, whenever the algorithm converges, the output is renegotiation proof. Concerning pairwise stability, we aim to prove that if $\pi_t$, the allocation before iteration $t$, is pairwise stable then, $\pi_{t+1}$ is pairwise stable as well. Let $(d,\mu(d))$ be a couple that changes of strategy profile at iteration $t$. Let $(x_d,y_{\mu(d)})$ be their strategy profile at iteration $t$ and $(\hat{x}_d, \hat{y}_{\mu(d)})$ at time $t+1$. Suppose there exists $(i,j)$ a blocking pair of $\pi_{t+1}$. If $i \neq d$ (and analogously if $j \neq \mu(d)$) then $f_{i, \mu(i)}(\pi_{t+1}) = f_{i,\mu(i)}(\pi_{t})$. Thus, it cannot hold that both $i \neq d$ and $j \neq \mu(d)$, otherwise the pair $(i,j)$ would also block $\pi_t$. Without loss of generality, suppose that $i = d$. In particular, $j \neq \mu(d)$ because $i$ and $j$ are not a couple. It holds,
$$g_{d,j}(s,r) \bi g_{\mu(j),j} (\pi_{t+1}) = g_{\mu(j),j} (\pi_{t}),$$
where $(s,r) \in X_d \times Y_h$ is the strategy profile used by $(b,j)$ to block $\pi$. Then, if $f_d^{\pi}$ is $d$'s reservation payoff at iteration $t$ (computed by \Cref{eq:outside_options}), it holds 
$$f_d^{\pi} \geq f_{d,j}(s,r) = f_{i,j}(s,r) \bi f_{i, \mu(i)}(\pi_{t+1}) = f_{d, \mu(d)}(\hat{x}_d, \hat{y}_{\mu(d)}).$$
This contradicts the fact that $(\hat{x}_d, \hat{y}_{\mu(d)})$ is $(f_d^{\pi},g_h^{\pi})$-feasible.
\end{proof} 

The finiteness of the algorithm for zero-sum games needs a preliminary result.
\medskip

\begin{lemma}\label{lemma:outside_options_are_monotone_in_zero_sum_games}
Let $\Gamma$ be a matching game where all strategic games are zero-sum games with a value. Let $\pi = (\mu,\vec{x},\vec{y})$ be a pairwise stable allocation and $(d,h) \in \mu$ be a matched couple. Let $w_{d,h}$ be the value of their game. Consider the sequence of reservation payoffs of $(d,h)$ denoted by $(f_d^{\pi(t)}, g_h^{\pi(t)})_t$, with $t$ being the iterations of Algorithm \ref{Algo:strategy_profiles_modification}. If there exists $t^*$ such that $w_{d,h} \leq f_d^{\pi(t)}$ (resp. $w_{d,h} \geq g_h^{\pi(t)}$), then the subsequence $(f_d^{\pi(t)})_{t\geq t^*}$ (resp. $(g_h^{\pi(t)})_{t\geq t^*}$) is non increasing (resp. non decreasing).
\end{lemma}

\begin{proof}
Suppose there exists an iteration $t$ in which $w_{d,h} \leq f_d^{\pi(t)} \leq g_h^{\pi(t)}$, so couple $(d,h)$ switches its payoff to $f_d^{\pi(t)}$ (recall that constrained Nash equilibrium payoffs are equal to the median between the value of the game and the players' reservation payoffs). Let $(\hat{x}_d, \hat{y}_h)$ be the constrained Nash equilibrium played by $(d,h)$ at iteration $t$. Since $(\hat{x}_d, \hat{y}_h)$ must be $(f_d^{{\pi(t+1)}}, g_h^{{\pi(t+1)}})$-feasible, in particular it holds $f_d^{{\pi(t+1)}} \leq f_{d,h}(\hat{x}_d, \hat{y}_h) = f_d^{\pi(t)}$. Therefore, the sequence of reservation payoffs starting from $t$ is non-increasing. 
\end{proof}

\begin{theorem}[\textbf{Convergence renegotiation process zero-sum games}]\label{teo:convergence_reneg_process_zero_sum_games}
For any oracle, the renegotiation process converges for zero-sum games with a value\footnote{The convergence result in [\cite{rochford1984symmetrically} Theorem 2] can be derived from \Cref{teo:convergence_reneg_process_zero_sum_games}. Indeed, whenever the input to our renegotiation process is the worst pairwise stable allocation for one side, the two proofs align.}.
\end{theorem}

\begin{proof} At the beginning of Algorithm \ref{Algo:strategy_profiles_modification}, each couple $(d,h)$ belongs to one of the following cases: $f_d^{\pi} \leq w_{d,h} \leq g_h^{\pi}$, $w_{d,h} \leq f_d^{\pi} \leq g_h^{\pi}$, or $f_d^{\pi} \leq g_h^{\pi} \leq w_{d,h}$. In the first case, the couple plays a Nash equilibrium and never changes it afterward. In the second case, as $f_d^{\pi}$ is non increasing for $d$ (\Cref{lemma:outside_options_are_monotone_in_zero_sum_games}) and bounded from below by $w_{d,h}$, her sequence of reservation payoffs converges. Analogously, the sequence of reservation payoffs for $h$ converges on the third case. Therefore, the renegotiation process converges.
\end{proof}  

\begin{theorem}[\textbf{Convergence renegotiation process potential games}]\label{teo:convergence_reneg_process_potential_games}
There exists an oracle for potential games such that the renegotiation process converges.
\end{theorem}

\begin{proof} 
Consider a couple $(d,h) \in \mu$ and $(\hat{x}_d^t,\hat{y}_h^t)_t$ their sequence of constrained Nash equilibria along the iterations. Since $(\hat{x}_d^{t-1},\hat{y}_h^{t-1})$ is always feasible for the following iteration (\Cref{teo:the_outside_options_of_a_blocking_pair_are_higher_than_their_payoff}), the sequence $\phi_{d,h}(\hat{x}_d^t,\hat{y}_h^t)_t$ is non-decreasing over $t$. Then, as the potential functions are continuous and the strategy sets are compact, the sequences $(\phi_{d,h}(\hat{x}_d^t,\hat{y}_h^t))_t$ are converging for any couple $(d,h)$. Thus, the renegotiation process converges.
\end{proof}

\begin{theorem}[\textbf{Convergence renegotiation process infinitely repeated ga- mes}]\label{teo:convergence_reneg_process_inf_rep_games}
There exists an oracle for infinitely repeated games such that the renegotiation process converges.
\end{theorem}

\begin{proof}
Let $\pi$ be a pairwise stable allocation, $t$ an iteration, and $(d,h) \in \mu$ a couple. Let $(f_d^{\pi(t)},g_h^{\pi(t)})$ be their reservation payoffs at iteration $t$, and consider $F_t := E \cap E(f_d^{\pi(t)},g_h^{\pi(t)})$. If $F_t$ is non-empty, there exists a $(f_d^{\pi(t)},g_h^{\pi(t)})$-feasible uniform equilibrium for $(d,h)$, so they keep playing it forever. If $F_t = \emptyset$, without loss of generality, assume that $f_d^{\pi(t)} \geq \alpha$ and $g_h^{\pi(t)}  \sm \beta$, where $\alpha$ and $\beta$ are the punishment levels of $d$ and $h$, respectively. Consider the oracle used in the proof of the feasibility of infinitely repeated games (\Cref{teo:Class_of_feasible_games}). Let $(f^t,g^t)$ be the $(f_d^{\pi(t)}, g_h^{\pi(t)})$-constrained Nash equilibrium payoff chosen at iteration $t$ by the oracle, so 
$$g^t := \max \bigl\{\bar{g} \in \mathbb{R}: \exists \bar{f} \in \mathbb{R} \text{ such that } (\bar{f},\bar{g}) \in E(f_d^{\pi(t)},g_h^{\pi(t)})\bigr\}.$$
If $g^t \geq \beta$, $(f^t, g^t) \in E$ and then, $F_t$ is non-empty, a contradiction. Thus, $g^t \sm \beta$. Let $(f_d^{\pi(t+1)}, g_h^{\pi(t+1)})$ be the couple's reservation payoffs at the following iteration, and set again $F_{t+1} = E \cap E(f_d^{\pi(t+1)}, g_h^{\pi(t+1)})$. If $F_{t+1}$ is non-empty, they play a uniform equilibrium. Otherwise, since $g_h^{\pi(t+1)} \leq g^t \sm \beta$ (\Cref{teo:the_outside_options_of_a_blocking_pair_are_higher_than_their_payoff}), in particular it holds that $f_d^{\pi(t+1)} \geq \alpha$ and $g_h^{\pi(t+1)}  \sm \beta$. Let $(f^{t+1}, g^{t+1})$ be the new constrained Nash equilibrium payoff found by the oracle. Since pairwise stability implies that $(f^t, g^t) \in E(f_d^{\pi(t+1)}, g_h^{\pi(t+1)})$ (\Cref{teo:the_outside_options_of_a_blocking_pair_are_higher_than_their_payoff}), $g^{t+1} \geq g^t$. In addition, as $F_{t+1}$ is empty, $g^{t+1}  \sm \beta$. Thus, $h$'s sequence of constrained Nash equilibrium payoffs $(g^t)_t$ is non-decreasing and bounded from above by $\beta$. Therefore, the sequence converges to a fixed payoff, and then, the renegotiation process converges as well.
\end{proof}

\begin{remark}
Unlike \Cref{teo:convergence_reneg_process_potential_games,teo:convergence_reneg_process_inf_rep_games}, where we have managed to construct an oracle to ensure the convergence of our renegotiation process, remark \Cref{teo:convergence_reneg_process_zero_sum_games} holds for any choice of oracle.
\end{remark}

\subsection{Pareto-optimality, pairwise stability, and Nash stability.}\label{sec:the_strongest_stability}

Consider a couple who plays a two-player infinitely repeated game. For simplicity, we drop the identity of the agents from the notation. Given a pair of reservation payoffs $(f^0,g^0)$, we recall the oracle designed for the feasibility of infinitely repeated games proof. Let $F := E(f_0,g_0) \cap E$, where $E(f_0,g_0)$ is the set of acceptable payoffs and $E$ is the set of uniform equilibrium payoffs. For any case in which $F$ is non-empty, the oracle picks a $(f_0,g_0)$-feasible uniform equilibrium as constrained Nash equilibrium. Otherwise, the oracle computes a $(f_0,g_0)$-CNE. It is interesting to study in which cases $F$ is empty.

Consider a \textit{matching pennies} game played in mixed strategies (\Cref{tab:matching_pennies_repeated}, also studied in \Cref{ex:matching_pennies}) repeated infinitely many times. The infinitely repeated version of the matching pennies game has $0$ as the only uniform equilibrium payoff. Consider reservation payoffs $(f_0,g_0)$ such that $w \sm f_0 \leq g_0$, as in \Cref{fig:matching_pennies_repeated}. It follows that $E(f_0,g_0)$ is the continuum of values in the blue line. It holds $F = \{0\} \cap E(f_0,g_0) = \varnothing$. The only constrained uniform equilibrium payoff corresponds to the limit point between the red and blue line, that is, when player $1$ gets exactly $f_0$ as payoff.

Any constant-sum game used as stage game for an infinitely repeated game will present the same behavior as matching pennies. However, for most of the rest of the stage games that we may consider, the intersection set $F$ will be non-empty. Let us illustrate this with another example. Let $G$ be a prisoners' dilemma (\Cref{tab:prinsoners_dilemma_repeated}) played in mixed strategies and repeated infinitely many times. 

The punishment levels of the agents being equal to $(0,0)$ (the unique Nash equilibrium payoff), the set of uniform equilibrium payoffs (\Cref{fig:uniform_payoffs}) is,
$$E = \{(\Bar{f},\Bar{g}) \in co(\{(2,2),(-1,3),(3,-1),(0,0)\}) : \Bar{f} \geq 0 \text{ and } \Bar{g} \geq 0\}.$$
Given a pair of reservation payoffs $(f_0,g_0)$ such that there exists at least one feasible contract, \Cref{fig:feasbility_proof_repeated_games} shows the four possible cases studied in the proof of the feasibility of infinitely repeated games.

\begin{minipage}[H]{0.4\linewidth}
\begin{table}[H]
\begin{tabular}{cccc}
& \multicolumn{3}{c}{Player 2}\\\noalign{\vskip 2mm}
\multirow{3}{*}{Player 1 } & & A & B \\
\cmidrule{2-4}
& A & 1 & -1 \\
\cmidrule{2-4}
& B & -1 & 1
\end{tabular}
\caption{Matching pennies}
\label{tab:matching_pennies_repeated}
\end{table}
\end{minipage}\hfill
\begin{minipage}[H]{0.58\linewidth}
\begin{figure}[H]
\centering
\includegraphics[scale = 0.7]{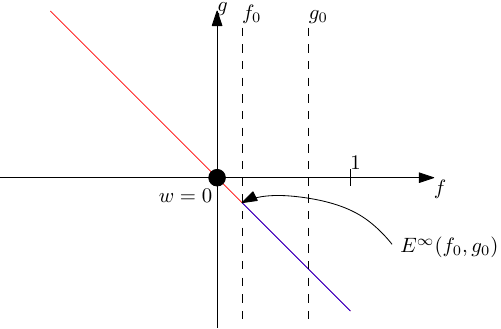}
\caption{Matching pennies}
\label{fig:matching_pennies_repeated}
\end{figure}
\end{minipage}
\medskip

\begin{minipage}[H]{0.5\linewidth}
\begin{table}[H]
\begin{tabular}{cccc}
     & \multicolumn{3}{c}{Player 2}\\\noalign{\vskip 2mm}
     \multirow{3}{*}{\hspace{-0.5cm}Player 1  } & & Cooperate & Betray \\
     \cmidrule{2-4}
     & Cooperate & $2, 2$ & $-1,3$ \\ 
     \cmidrule{2-4}
     & Betray & $3,-1$ & $0,0$
\end{tabular}
\caption{Prisoner's dilemma}
\label{tab:prinsoners_dilemma_repeated}
\end{table}
\end{minipage}\hfill
\begin{minipage}[H]{0.4\linewidth}
\begin{figure}[H]
\centering
\includegraphics[scale = 0.7]{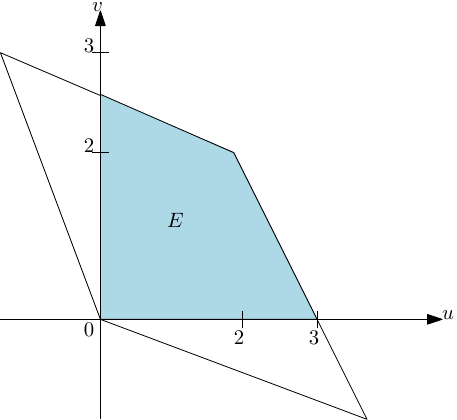}
\caption{Uniform equilibrium payoffs}
\label{fig:uniform_payoffs}
\end{figure}
\end{minipage}
\medskip

\begin{figure}[t]
\centering
\includegraphics[scale = 0.7]{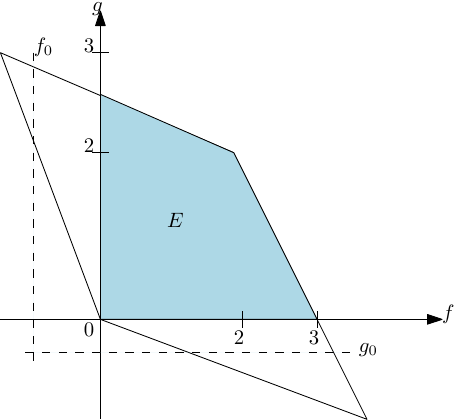}\qquad \includegraphics[scale = 0.7]{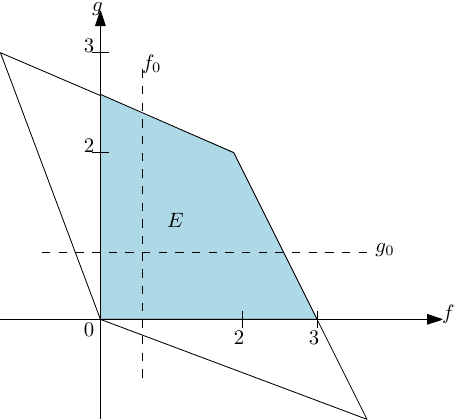}

\includegraphics[scale = 0.7]{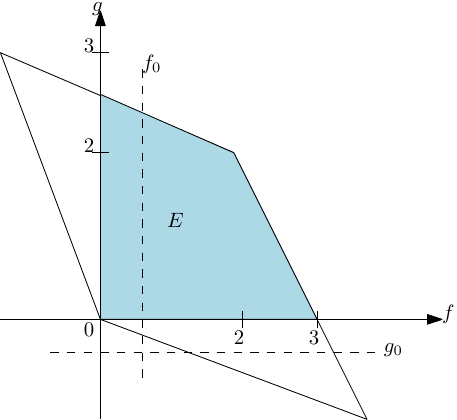}\qquad \includegraphics[scale = 0.7]{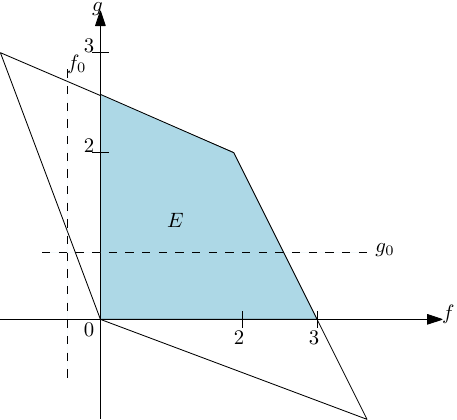}
\caption{Feasibility proof infinitely repeated games}
\label{fig:feasbility_proof_repeated_games}
\end{figure}

Thanks to the non-emptiness of the interior of $E$, given that for $(f_0,g_0)$ there exists at least one feasible strategy profile, the intersection between $E$ and $E(f_0,g_0)$ gives a non-empty set. Therefore, we can always find a $(f_0,g_0)$-feasible uniform equilibrium of the repeated game in which no agent has incentives to deviate. This is not the case for infinitely repeated games with constant-sum games as stage games. To an infinitely repeated game that has a set of uniform equilibrium payoffs with a non-empty interior we call it \textbf{non-degenerated}.

Consider next a matching game $\Gamma = (D_0, H_0, \{G_{d,h}: (d,h) \in D \times H\}, \underline{f}, \underline{g})$. We recall the notion of Nash stability given in \Cref{chapter:one_to_one_matching_games}.
\medskip

\noindent\textbf{\Cref{def:internally_Nash_stable_matching_profile}}. An allocation $\pi = (\mu,\vec{x},\vec{y})$ is \textbf{Nash stable} if for any matched couple $(d,h) \in \mu$, $(x_d,y_h)\in \text{N.E}(G_{d,h})$, i.e, $(x_d,y_h)$ is a Nash equilibrium of $G_{d,h}$.
\medskip

The following theorem proves the equivalence between the models with and without commitment when $\Gamma$ is an infinitely repeated matching game.
\medskip

\begin{theorem}\label{teo:equivalance_commitment_and_no_commitment_repeated_games}
Let $\Gamma$ be a matching game in which each strategic game $G_{d,h}$, for $(d,h) \in D \times H$, is a non-degenerated infinitely repeated game. Let $\pi = (\mu, \vec{\sigma}_D,\vec{\sigma}_H)$ be a pairwise stable allocation, with $\vec{\sigma}_D$ and $\vec{\sigma}_H$ profiles of behavioral strategies. 
Then, there always exists a pairwise stable (\Cref{def:externally_stable_matching_profile}) and Nash stable (\Cref{def:internally_Nash_stable_matching_profile}) allocation $\pi' = (\mu,\vec{\tau}_D,\vec{\tau}_H)$ that weakly Pareto-dominates $\pi$. In addition, an allocation is renegotiation proof allocation if and only if it is Nash stable.
\end{theorem}
\medskip

Remark that $\pi$ and $\pi'$ in the previous theorem have the same matching $\mu$.

\begin{proof}
That renegotiation proof allocations and Nash stable allocations are equivalent comes from the previous discussion. As the games are non-degenerated, for each couple there always exists a feasible uniform equilibrium. Therefore, the allocation is renegotiation proof if and only if the oracle can always choose a uniform equilibrium as constrained Nash equilibrium.

Regarding the first point, consider a couple $(d,h)\in \mu$ and their behavioral strategy profile $(\sigma_d,\sigma_h)$ in $\pi$. The payoff profile $(f_{d,h}(\sigma_d,\sigma_h), g_{d,h}(\sigma_d,\sigma_h)) \in \mathbb{R}^2$ belongs to the convex envelope of the stage game payoffs. Since the game is non-degenerated, there exists a payoff profile $(f_d^*,g_h^*) \geq (f_{d,h}(\sigma_d,\sigma_h), g_{d,h}(\sigma_d,\sigma_h))$ that belongs to the set of uniform equilibrium payoffs. Therefore, there exists a uniform equilibrium $(\tau_d,\tau_h)$ satisfying,
\begin{align*}
    &f_{d,h}(\tau_d,\tau_h) = f_d^* \geq f_{d,h}(\sigma_d,\sigma_h) = f_d(\pi),\\
    &g_{d,h}(\tau_d,\tau_h) = g_h^* \geq g_{d,h}(\sigma_d,\sigma_h) = g_h(\pi).
\end{align*}
We conclude the proof.
\end{proof}

We have remarked that the oracle designed to compute constrained Nash equilibria in infinitely repeated games selects a Pareto-optimal outcome (\Cref{remark:oracle_for_repeated_games_is_always_Pareto}). Consequently, the pairwise stable and renegotiation proof allocations in matching games with non-degenerated infinitely repeated keep the property that all couples play Pareto-optimally within their games. In this setting, we achieve the strongest stability criterion in our model: \textit{Pareto-optimality, pairwise stability, and Nash stability}.

\section{Conclusions}

This article introduces an extension of the stable matching model which endows the agents within the market with strategy sets and payoff functions. The model encompasses a large number of works from the literature. 



Using a \textit{deferred-acceptance with competitions} algorithm, we prove the existence of pairwise stable allocations with and without commitment, under mild conditions. Moreover, when players can commit, a novel \textit{renegotiation proofness} condition allows use to refine drastically the set of pairwise stable solutions (sometimes from the continuum to a single prediction).


Renegotiation proof stable allocations are characterized as those in which all matched pairs play (or sign bilateral contracts that satisfy) a participation-\textit{constrained Nash equilibrium} (CNE) condition. After proving that some games do not admit a CNE, we identify a large class admitting a CNE, including strictly competitive, potential and infinitely repeated games. We then designed a renegotiation process that converges to a renegotiation proof pairwise stable allocations in all those classes of games. Moreover, it can be proved that, over the same classes, our two algorithms -DAC and the renegotiation process- converge in polynomial time whenever the underlying strategic games are bi-linear \citep{garridolucero:tel-04167090}.


The characterization of all feasible  strategic games (i.e. those that admit a CNE for all outside options) is an open problem that may be of interest beyond our framework, e.g. in contract theory, but also in a two stage matching game model with commitment as studied by \cite
{noldeke2015investment}. Actually, suppose that at stage 1, each player must choose an action (an investment, a type or characteristic) that will advantage him/her in the matching market in the future (e.g. a diploma, programming skills, reputation, patents). At stage 2, players are matched --or remains single-- by some stable matching algorithm such as the Gale-Shapley's deferred-acceptance where the type $d$ players propose. The players payoffs are given as follows: if agent $d$ has chosen at stage 1 $x_d$ and is matched with agent $h$ at stage 2 who has chosen action $y_h$ at stage 1, the payoff of $d$ is $f_d(x_d,y_h)$ and of $h$ is $g_h(x_d,y_h)$. One may wonder if this game admits a Nash equilibrium. \Cref{ex:non_feasible_game} shows that it may not exist and that the feasibility condition is necessary for the Nash equilibrium existence. The question of whether feasibility is sufficient remains open. 

The \cite{noldeke2015investment} model is different from ours as players' actions are chosen and fixed once for all at stage 1, and will not be changed during the matching process or be adapted to the partner. In our model, however, a player may adapt his behavior to the partner. One could combine both!

The generalization to one-to-many markets, where hospitals can accept multiple doctors, is a natural next step for our model. In such cases, conditions like substitutability are necessary to ensure the existence of a pairwise/core stable allocation \citep{garridolucero:tel-04167090}. Similarly, extending the model to dynamic environments offers an interesting research direction with numerous potential applications, such as the job market, where a matching game is played repeatedly over time, allowing agents to change or experiment with new partners.

In line with the overall spirit of this paper, our infinitely repeated game model (\Cref{sec:infinitely_repeated_games_are_feasible}) assumes that agents commit not only to certain strategies within the repeated game but also to remaining paired with the same partner. However, it has been shown that commitment to strategies can be relaxed if the associated one-shot game is generic (\Cref{sec:the_strongest_stability}). By further relaxing the assumption of partner commitment, our results can be interpreted as demonstrating the existence of \textit{dynamically pairwise stable stationary equilibria} in general dynamic matching models where players are \textit{impatient}, stage games are generic, and agents are free to change partners during play. This can be achieved because, at any point, remaining with the current partner is pairwise stable — our equilibrium construction ensures that players receive at least the payoff they could obtain with another partner. Additionally, since the continuation payoff at any moment along the equilibrium path matches the initial equilibrium payoff, the pairwise stability condition holds throughout the repeated games.

This leads to key questions in the spirit of the Folk theorem: (1) how can we characterize the set of dynamically pairwise stable equilibrium payoffs, particularly when players can change partners, and (2) under what conditions does this set correspond to the limiting stable discounted equilibrium payoffs as the discount factor approaches 1?

\begin{appendices}
\section{Strictly competitive games}\label{sec:feasibility_strictly_competitive_games} 

First of all, we introduce strictly competitive games.
\medskip

\begin{definition}\label{def:strictly_competitive_game}
\textup{A two-player game $G = (X,Y,f,g)$ is \textbf{strictly competitive} if for any strategy profile $(x,y) \in X\times Y$, it holds,
\begin{align*}
&\forall x' \in X : f(x',y) \bi f(x,y), g(x',y) \sm g(x,y), \text{ and},\\
&\forall y' \in Y : g(x,y') \bi g(x,y), f(x,y') \sm f(x,y).
\end{align*}}
\end{definition}

We denote $\mathcal{S}$ the class of strictly competitive games à la \cite{aumann1961almost}, that is, all strictly competitive games obtained as monotone transformations of zero-sum games. To prove \Cref{teo:Class_of_feasible_games} we use Aumann's conjecture that $\mathcal{S}$ covers the entire class of competitive games, although the proof is known only for the finite case \citep{adler2009note}. Consider a strictly competitive game $G = (X,Y,f,g)$, with $X,Y$ compact sets and $f,g$ continuous payoff functions. Let $\varphi, \phi$ be increasing functions such that the game $G' = (X,Y,u)$ is a zero-sum game where $f = \varphi \circ u$ and $g = \phi \circ u$. Nash equilibria of $G$ and $G'$ coincide, and Nash equilibrium payoffs are the image through the increasing functions from one game to another. In particular, if $w$ is the value of $G'$, then $(\varphi^{-1}(w), \phi^{-1}(w))$ is a Nash equilibrium payoff of $G$. 

Let $(f_0,g_0)$ be outside options of the players in $G$, and let $(x^*,y^*)$ be a $(f_0,g_0)$-constrained Nash equilibrium. Consider the corresponding outside options in $G'$ given by $f_0' := \varphi(f_0)$, $g_0' := -\phi(g_0)$. Indeed, $f_0'$ and $g_0'$ are outside options for the players in their zero-sum game as, for any $(x,y) \in X \times Y$ such that $f_0 \leq f(x,y)$ and $g_0 \leq g(x,y)$, it holds 
$$f'_0 \leq \varphi(f_0) \leq \varphi(f(x,y)) = - \phi(g(x,y)) \leq g_0'.$$
Also, $(x^*,y^*)$ is $(f_0',g_0')$-feasible in $G'$, and it is direct that $(x^*,y^*)$ is a $(f_0',g_0')$-constrained Nash equilibrium, as increasing functions preserve inequalities. From the proof of zero-sum games' feasibility, we conclude the following result.
\medskip

\begin{theorem}\label{teo:strictly_competitive_games_are_feasible}
Let $G = (X,Y,f,g)$ be a strictly competitive game, $G' = (X,Y,u)$ a zero-sum game, and $\varphi, \phi$ increasing functions such that the game $f = \varphi \circ u$, $g = \phi \circ u$. Suppose that $G'$ has a value $w$ (if and only if $G$ has a Nash equilibrium). Then, given $f_0, g_0$ outside options in $G$, which admit a feasible strategy profile, there always exists a $(f_0,g_0)$-CNE $(x^*,y^*)$ of $G$. In addition, it holds,
\begin{align*}
f(x^*,y^*) &= \text{median}\{f_0,\varphi^{-1}(-\phi(g_0)), \varphi^{-1}(w)\}\\
g(x^*,y^*) &= \text{median}\{\phi^{-1}(-\varphi(f_0)),g_0, \phi^{-1}(-w)\}
\end{align*}
\end{theorem}

\begin{remark}
\textup{As for zero-sum games, the selection of constrained Nash equilibria is always \textbf{Pareto-optimal} for strictly competitive games.
}\end{remark}
\medskip

Regarding the convergence of the renegotiation process, this is a corollary of the one for zero-sum games. 
\medskip

\begin{theorem}[\textbf{Convergence renegotiation process strictly competitive ga- mes}]\label{teo:convergence_reneg_process_str_comp_games}
For any oracle, the renegotiation process converges for strictly competitive games in $\mathcal{S}$ with an equilibrium.
\end{theorem}
\medskip

Shapley-Shubik's and Demange-Gale's models can be mapped into a matching game in which all strategic games $G_{d,h}$ are included in the class $\mathcal{S}$ of strictly competitive games (see Section \ref{sec:shapley_shubik_gale_demange_models}). Our results, therefore, apply directly to their works proving the existence of allocations that are not only pairwise stable but also renegotiation proof. The refinement induced by renegotiation proofness crucially depends on the choice of the strategic games $G_{d,h}$. For example, if we model the game between a buyer and a seller as an ultimatum game \citep{abreu2000bargaining} where the buyer is the first proposer, she gets all the surplus, while when the first proposer is the seller, she is the one who gets all the surplus. However, if the game is an alternative offer bargaining game \citep{osborne2019bargaining}, the surplus is shared equally.

\end{appendices}

\section*{Declarations}

\textbf{Funding and Competing interests}. All authors certify that they have no affiliations with or involvement in any organization or entity with any financial interest or non-financial interest in the subject matter or materials discussed in this manuscript.

\bibliography{sn-bibliography}

\end{document}